\newtheorem{proposition}{Proposition}
\newtheorem{lemma}{Lemma}
\newtheorem{corollary}{Corollary}
\newtheorem{definition}{Definition}
\newtheorem{theorem}{Theorem}
\newtheorem{remark}{Remark}
\newcommand{\LCCM}{{\sc List Coloring Clique Modulator}}
\newcommand{\CM}{{\sc Clique Modulator}}
\newcommand{\LCNminusK}{{\sc List Coloring with $n-k$ colors}}
\newcommand{\PCECM}{{\sc Pre-Coloring Extension Clique Modulator}}
\newcommand{\RLC}{{\sc $(n-k)$-Regular List Coloring}}
\newcommand{\PCESN}{{\sc $(n-|Q|)$-Pre-Coloring Extension}}
\newcommand{\PCESQ}{{\sc $(|Q|-k)$-Pre-Coloring Extension}}
\newcommand{\SKC}{{\sc $(n-k)$-Coloring}}
\newcommand{\SB}{\{\,} \newcommand{\SM}{\;{|}\;}
\newcommand{\SE}{\,\}}
\newcommand{\HS}{{\sc Hitting Set}}
\newcommand{\IS}{{\sc Independent Set}}
\newcommand{\cF}{{\mathcal F}}
\newcommand{\bigoh}{{\mathcal O}}
\newenvironment{tightcenter}
 {\parskip=0pt\par\nopagebreak\centering}
 {\par\noindent\ignorespacesafterend}
\newlength{\RoundedBoxWidth}
\newsavebox{\GrayRoundedBox}
\newenvironment{GrayBox}[1]%
   {\setlength{\RoundedBoxWidth}{\textwidth-4.5ex}
    \def\boxheading{#1}
    \begin{lrbox}{\GrayRoundedBox}
       \begin{minipage}{\RoundedBoxWidth}%
   }{%
       \end{minipage}
    \end{lrbox}%
    \begin{tightcenter}%
    \begin{tikzpicture}%
       \node(Text)[draw=black!20,fill=white,rounded corners,%
             inner sep=2ex,text width=\RoundedBoxWidth]%
             {\usebox{\GrayRoundedBox}};
        \coordinate(x) at (current bounding box.north west);
        \node [draw=white,rectangle,inner sep=3pt,anchor=north west,fill=white] 
        at ($(x)+(6pt,.75em)$) {\boxheading};
    \end{tikzpicture}
    \end{tightcenter}\vspace{0pt}%
    \ignorespacesafterend
}    
\newenvironment{problem}[2][]{\noindent\ignorespaces%
                                \FrameSep=6pt%
                                \parindent=0pt%
                \vspace*{-.5em}
                \ifthenelse{\isempty{#1}}{%
                  \begin{GrayBox}{\textsc{#2}}%                
                }{%
                  \begin{GrayBox}{\textsc{#2} parameterized by~{#1}}%  
                }
                \newcommand\Prob{Problem:}%
                \newcommand\Input{Input:}%                        
                \begin{tabular*}{\textwidth}{@{\hspace{.1em}} >{\itshape} p{1.2cm} p{0.85\textwidth} @{}}%     
            }{
                \end{tabular*}%
                \end{GrayBox}%
                \vspace*{-.5em}
                \ignorespacesafterend
            }
\newtheorem{reduction rule}{\bf Reduction Rule}
\newtheorem{observation}{\bf Observation}
\title{Parameterized Pre-coloring Extension and List Coloring Problems} %TODO Please add
\author{Gregory Gutin, Diptapriyo Majumdar\\Royal Holloway, University of London, UK \and Sebastian Ordyniak\\ University of Sheffield, UK \and Magnus Wahlstr{\"o}m\\Royal Holloway, University of London, UK}
\begin{document}

\maketitle

%TODO mandatory: add short abstract of the document
\begin{abstract}
Golovach, Paulusma and Song (Inf. Comput. 2014) asked to determine the
parameterized complexity of the following problems parameterized by
$k$: (1)  Given a graph $G$, a clique modulator $D$ (a \emph{clique
  modulator} is a set of vertices, whose removal results in a clique)
of size $k$ for $G$, and a
list $L(v)$ of colors for every $v\in V(G)$,
decide whether $G$ has a proper list coloring; (2) Given a graph $G$,
a clique modulator $D$ of size $k$ for $G$, and a pre-coloring $\lambda_P: X \rightarrow Q$ for $X
\subseteq V(G),$   decide whether $\lambda_P$ can be extended to a
proper coloring of $G$ using only colors from $Q.$ For Problem 1 we
design an $\bigoh^*(2^k)$-time randomized algorithm and for Problem 2
we obtain a kernel with at most $3k$ vertices. Banik et al. (IWOCA
2019) proved the the following problem is fixed-parameter tractable
and asked whether it admits a polynomial kernel: Given a graph $G$, an
integer $k$, and a list $L(v)$ of exactly $n-k$ colors for every $v
\in V(G),$ decide whether there is a proper list coloring for $G.$ We
obtain a kernel with $\bigoh(k^2)$ vertices and colors and a
compression to a variation of the problem with $\bigoh(k)$ vertices
and $\bigoh(k^2)$ colors.

\end{abstract}

%\newpage

\section{Introduction}

Graph coloring is a central topic in Computer Science and Graph Theory
due to its importance in theory and applications. Every text book in
Graph Theory has at least a chapter devoted to the topic and
the monograph of Jensen and Toft \cite{JensenT} is completely devoted
to graph coloring problems focusing especially on more than 200
unsolved ones. There are many survey papers on the topic including
recent ones such as~\cite{Chudnovsky,GolovachJPS17,RanderathS,Tuza}.

For a graph $G$, a {\em proper coloring} is a function $\lambda:\
V(G)\rightarrow \mathbb{N}_{\ge 1}$ such that for no pair $u,v$ of
adjacent vertices of $G$, $\lambda(u)=\lambda(v).$
In the widely studied {\sc Coloring} problem, given a graph $G$ and a
positive integer $p$, we are to decide whether there is a proper
coloring $\lambda:\ V(G)\rightarrow [p],$ where
henceforth $[p]=\{1,\dots ,p\}.$   
In this paper, we consider two extensions of {\sc Coloring}: the {\sc
  Pre-Coloring Extension} problem and the {\sc List Coloring}
problem.
In  the {\sc Pre-Coloring Extension} problem, given a graph $G,$ a set $Q$ of colors, and a {\em pre-coloring}
$\lambda_P:\ X \rightarrow Q,$ where $X \subseteq V(G),$ we are to
decide whether there is a proper coloring $\lambda :\ V(G) \rightarrow
Q$ such that $\lambda(x)=\lambda_P(x)$ for every $x\in X.$
In the {\sc List Coloring} problem,  given a graph $G$ and a list $L(u)$ of possible colors for every vertex $u$ of $G$, we are to decide whether $G$ has a proper coloring $\lambda$ such that
$\lambda(u)\in L(u)$ for every vertex $u$ of $G.$ Such a coloring
$\lambda$ is called a {\em proper list coloring}.
Clearly, {\sc Pre-Coloring Extension} is a special case of {\sc List
  Coloring}, where all lists of vertices $x\in X$ are singletons.

The $p$-{\sc Coloring} problem is a special case of  {\sc Coloring}
when $p$ is fixed (i.e., not part of input).  When $Q\subseteq [p]$
($L(u)\subseteq [p]$, respectively), {\sc Pre-Coloring Extension}
({\sc List Coloring}, respectively)
are called  {\sc $p$-Pre-Coloring Extension} ({\sc List $p$-Coloring}, respectively). In classical complexity, it is well-known that $p$-{\sc Coloring},  {\sc $p$-Pre-Coloring Extension} and {\sc List $p$-Coloring} are polynomial-time solvable for $p\le 2$, and the three problems become NP-complete for every $p\ge 3$ \cite{Lovasz,RanderathS}. In this paper, we solve several open problems about pre-coloring extension and list coloring problems, which lie outside classical complexity, so-called parameterized problems. 
We provide basic notions on parameterized complexity in the next section. For more information on parameterized complexity, see recent books \cite{cygan2015,downey2013,fomin2019}.

The first two problems we study are the following ones stated by Golovach et al. \cite{golovachPS14} (see also
\cite{Paulusma15}) who asked to determine their parameterized
complexity. These questions were motivated by a result of Cai
\cite{cai03} who showed that {\sc Coloring Clique Modulator} (the
special case of \PCECM{} when $X=\emptyset$) is fixed-parameter
tractable ({\sf FPT}). Note that a \emph{clique modulator} of a graph
$G$ is a set $D$ of vertices such that $G-D$ is a clique. When using
the size of a clique modulator as a parameter we will for convinience
assume that the modulator is given as part of the input. Note that
this assumption is not necessary (however it avoids having to repeat
how to compute a clique modulator) as we will show in
Section~\ref{ssec:cliquemod} that computing a clique
modulator of size $k$ is FPT and can be approximated to within a
factor of two.

\begin{problem}[$k$]{\LCCM}
  \Input & A graph $G$, a clique modulator $D$ of size $k$ for $G$,
  and a list $L(v)$ of colors for every $v\in V(G).$\\
  \Prob  & Is there a proper list coloring for $G$?
\end{problem}

\begin{problem}[$k$]{\PCECM}
  \Input & A graph $G$, a clique modulator $D$ of size $k$ for $G$, and a pre-coloring
  $\lambda_P: X \rightarrow Q$ for $X \subseteq V(G)$ where $Q$ is a
  set of colors.\\
  
  \Prob  & Can $\lambda_P$ be extended to a proper coloring of $G$
  using only colors from $Q$?
\end{problem}

In Section \ref{sec:cm} we show that {\LCCM} is {\sf FPT}. We first show
a randomized $\bigoh^*(2^{k\log k})$-time algorithm, then we improve
the running time to $\bigoh^*(2^{k})$ using more refined tools and approaches.  
We note that the time $\bigoh^*(2^k)$ matches the best known running
time of $\bigoh^*(2^n)$ for \textsc{Chromatic Number} (where
$n=|V(G)|$)~\cite{BjHuKo09}, while applying to a more powerful
parameter.
It is a long-open problem
whether \textsc{Chromatic Number} can be solved in time $\bigoh(2^{c n})$
for some $c<1$ and Cygan et al.~\cite{CyganDLMNOPSW16} ask whether it
is possible to show that such algorithms are impossible
assuming the Strong Exponential Time Hypothesis (SETH). 
%Note that solving \textsc{Chromatic Number} is a simpler task than to solve either \LCCM{} or \PCECM{} in time $\bigoh(2^{\epsilon k}n^{\bigoh(1)})$ (consider the case $k=n$).

We conclude  Section \ref{sec:cm} by showing  that \LCCM{} does not
admit a polynomial kernel unless NP $\subseteq$ coNP/poly.
The reduction used to prove this result allows us to observe that if \LCCM{} could be solved in
time $\bigoh(2^{c k}n^{\bigoh(1)})$ for some $c<1$, then
the well-known \textsc{Set Cover} problem could be solved in time
$\bigoh(2^{c |U|}|\cF|^{\bigoh(1)}),$ where $U$ and ${\cal F}$ are
universe and family of subsets, respectively. The existence of such an
algorithm is open, and it has been conjectured that no such algorithm is
possible under SETH; see Cygan et al.~\cite{CyganDLMNOPSW16}. Thus, up to the assumption of
this conjecture (called {\em Set Cover Conjecture} \cite{KrauthgamerT19}) 
and SETH, our $\bigoh^*(2^{k})$-time algorithm for \LCCM{} is best possible w.r.t. its
dependency on $k$.

In Section \ref{sec:precol}, we consider \PCECM, which is a subproblem of \LCCM{} and 
prove that \PCECM, unlike  \LCCM, admits a polynomial kernel: a linear kernel with at most $3k$ vertices. %and a function $c:\ V(G)\rightarrow \mathbb{N}_{\ge 1}$ is called a {\em list coloring} if $c(u)\in L(u)$ for every $u\in V(G).$
%A list coloring $c$ is a {\em proper list coloring} if for no pair $u,v$ of adjacent vertices, $c(u)=c(v).$ 
This kernel builds on a known, but counter-intuitive property of bipartite matchings (see Proposition~\ref{pro:matching}), which was previously used in kernelization by Bodlaender et al.~\cite{BodlaenderJK13}.

In Section \ref{sec:regpk}, we study an open problem stated by Banik et al.~\cite{IWOCApaper}. In a classic result, Chor et al.~\cite{chorFJ04} showed that \textsc{Coloring} has a linear vertex kernel parameterized by $k=n-p$, i.e., if the task is to ``save $k$ colors''.  Arora et al.~\cite{AroraBP018} consider the following as a natural extension to list coloring, and show that it is in XP.  Banik et al.~\cite{IWOCApaper} show that the problem is {\sf FPT}, but leave as an open question whether it admits a polynomial kernel. 

\begin{problem}[$k$]{\RLC}
  \Input & A graph $G$ on $n$ vertices, an integer $k$, and a list $L(v)$ of exactly $n-k$ colors for every $v \in V(G)$.\\
  \Prob  & Is there a proper list coloring for $G$?
\end{problem}

We answer this question in affirmative by giving a kernel with $\bigoh(k^2)$ vertices and colors, as well as a compression to a variation of the problem with $\bigoh(k)$ vertices, encodable in $\bigoh(k^2 \log k)$ bits.  
We note that this compression is asymptotically almost tight, as even \textsc{4-Coloring} does not admit a compression into $\bigoh(n^{2-\varepsilon})$ bits for any $\varepsilon > 0$ unless the polynomial hierarchy collapses~\cite{JansenP17}.

This kernel is more intricate than the above. Via known reduction rules from Banik et al.~\cite{IWOCApaper}, we can compute a clique modulator of at most $2k$ vertices (hence our result for \LCCM{} also solves \RLC{} in $2^{O(k)}$ time). 
However, the usual ``crown rules'' (as in~\cite{chorFJ04} and in Section~\ref{sec:precol})
are not easily applied here, due to complications with the color lists. 
Instead, we are able to show a set of $\bigoh(k)$ vertices whose colorability make up the ``most interesting'' part of the problem, leading to the above-mentioned compression and kernel.

In Section~\ref{sec:save-col}, we consider further natural pre-coloring and list coloring variants 
of the ``saving $k$ colors''
problem of Chor et al.~\cite{chorFJ04}. % , where given a graph $G$ with $n$ vertices and an integer $k$ one ask whether
%$G$ can be colored using at most $n-k$ colors. 
We show that the known
fixed-parameter tractability and linear kernelizability~\cite{chorFJ04} carries over to a natural pre-coloring
generalization but fails for a more general list coloring
variant. Since \RLC{} was originally introduced in~\cite{AroraBP018} as a list
coloring variant of the ``saving $k$ colors'' problem, it is
natural to consider other such variants.

We conclude the paper in Section \ref{sec:conc}, where in particular a number of open questions are discussed. 

\section{Preliminaries}

\subsection{Graphs and Matchings}
We consider finite simple undirected graphs.  For basic
terminology on graphs, we refer to a standard textbook~\cite{Diestel12book}.
For an undirected graph $G=(V,E)$ we denote by $V(G)$
the vertex set of $G$ and by $E(G)$ the edge set of $G$.
For a vertex $v \in V(G)$, we denote by $N_G(v)$ and $N_G[v]$ the \emph{open}
respectively \emph{closed neighborhood} of $v$ in $G$, i.e., $N_G(v):=\SB u
\SM \{u,v\} \in E(G)\SE$ and $N_G[v]:=N_G(v)\cup \{v\}$. We extend
this notion in the natural manner to subsets $V' \subseteq V(G)$, by
setting $N_G(V'):=\bigcup_{v \in V'}N_G(v)$ and $N_G[V']:=\bigcup_{v
  \in V'}N_G[v]$. Moreover, we omit the subscript $G$, if the graph
$G$ can be inferred from the context.
If $V'
\subseteq V(G)$, we denote by $G\setminus V'$ the graph obtained from
$G$ after deleting all vertices in $V'$ together with their adjacent
edges and we denote by $G[V']$ the graph induced by the vertices in
$V'$, i.e., $G[V']=G \setminus (V(G)\setminus V')$. 
We say that $G$ is {\em bipartite with bipartition} $(A,B)$, if $\{A,B\}$
partitions $V(G)$ and $G[A]$ as well as $G[B]$ have no edges.

A \emph{matching} $M$ is a subset of $E(G)$ such that no two edges in
$M$ share a common endpoint. We say that $M$ is \emph{maximal} if
there is no edge $e\in E(G)$ such that $M\cup \{e\}$ is a matching
and we say that $M$ is \emph{maximum} if it is maximal and there is no
maximal matching in $G$ containing more edges than $M$. We denote by
$V(M)$ the set of all endpoints of the edges in $M$, i.e.,
the set $\bigcup_{e\in M}e$. We say that
$M$ {\em saturates} a subset $V'\subset V(G)$ if $V' \subseteq V(M)$.

Let $H=(V,E)$ be an undirected bipartite graph with bipartition $(A,B)$.
We say that a set $C$ is a
\emph{Hall set} for $A$ or $B$ if $C \subseteq A$ or $C \subseteq B$,
respectively, and $|N_H(C)|<|C|$. 

We will need the following well-known properties for matchings.
\begin{proposition}[Hall's Theorem~\cite{Diestel12book}]\label{pro:hallset}
  Let $G$ be an undirected bipartite graph with bipartition
  $(A,B)$. Then $G$ has a matching saturating $A$ if and only if there
  is no Hall set for $A$, i.e., for
  every $A' \subseteq A$, it holds that $|N(A')|\geq |A'|$. 
\end{proposition}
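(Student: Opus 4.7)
The plan is to prove the two directions of the biconditional separately. The forward implication (matching $\Rightarrow$ no Hall set) is essentially immediate: if $M$ saturates $A$, then for any $A' \subseteq A$ the partner map $v \mapsto M(v)$ is an injection from $A'$ into $N(A')$, forcing $|N(A')| \geq |A'|$. So the real content lies in the converse.

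For the converse I would argue by induction on $|A|$. The base case $|A|=1$ is trivial since $|N(A)| \geq 1$ provides a matching edge. For the inductive step I would split on whether the Hall condition is ``tight'' somewhere. \textbf{Case 1:} Suppose $|N(A')| \geq |A'|+1$ for every nonempty proper subset $A' \subsetneq A$. Pick any vertex $u \in A$ and any neighbor $v \in N(u)$, add the edge $uv$ to the matching, and delete $\{u,v\}$. For any $A' \subseteq A \setminus \{u\}$, removing $v$ can decrease $|N(A')|$ by at most one, so the Hall condition survives with the slack absorbed, and induction gives a matching saturating $A\setminus\{u\}$ in the reduced graph. \textbf{Case 2:} Suppose some nonempty proper $A^* \subsetneq A$ satisfies $|N(A^*)| = |A^*|$. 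Apply induction to $G[A^* \cup N(A^*)]$ to obtain a matching $M_1$ saturating $A^*$. Then consider $G' = G \setminus (A^* \cup N(A^*))$; I would verify that Hall's condition still holds there, because for $A'' \subseteq A \setminus A^*$ we have $N_{G'}(A'') = N_G(A'') \setminus N(A^*)$, and using $|N_G(A^* \cup A'')| \geq |A^*| + |A''|$ together with $|N_G(A^*)| = |A^*|$ yields $|N_{G'}(A'')| \geq |A''|$. Induction then gives a matching $M_2$ saturating $A \setminus A^*$, and $M_1 \cup M_2$ is the desired matching.

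The main obstacle I anticipate is the inequality bookkeeping in Case 2, specifically showing that deleting the ``tight'' block $A^* \cup N(A^*)$ does not break Hall's condition on the remainder. Everything else is structural: choosing the right case split (tight vs.\ slack) is what makes the induction go through, while the verification itself reduces to one application of inclusion of neighborhoods. I would prefer this purely combinatorial induction over the alternative proofs (augmenting paths or max-flow/min-cut) because it is self-contained and does not require developing additional machinery beyond what is already in the preliminaries.
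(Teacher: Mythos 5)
Your proof is correct; it is the classical Halmos--Vaughan induction on $|A|$, with the case split on whether some nonempty proper subset of $A$ is tight for Hall's condition, and the bookkeeping in Case~2 is handled properly (the key step $|N_{G'}(A'')| \geq |A''|$ follows cleanly from $|N_G(A^* \cup A'')| = |N_G(A^*)| + |N_{G'}(A'')|$). Note, however, that the paper does not prove this statement at all: it is stated as Proposition~\ref{pro:hallset} with a citation to Diestel's textbook and used as a black box, so there is no ``paper proof'' to compare against. Your argument is a perfectly standard self-contained proof of Hall's theorem, but in the context of this paper supplying one is unnecessary — the proposition is invoked, not reproved.
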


\begin{proposition}[{\cite[Theorem 2]{BodlaenderJK13}}]\label{pro:matching}
  Let $G$ be a bipartite graph with bipartition $(X,Y)$ and let $X_M$
  be the set of all vertices in $X$ that are endpoints of a maximum
  matching $M$ of $G$. Then, for every $Y' \subseteq Y$, it holds that
  $G$ contains a matching that covers $Y'$ if and only if so
  does $G[X_M\cup Y]$.
\end{proposition}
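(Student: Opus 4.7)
The plan is to prove the easy $(\Leftarrow)$ direction immediately (since $G[X_M\cup Y]$ is a subgraph of $G$) and to establish the nontrivial $(\Rightarrow)$ direction by an augmenting-path argument that converts any matching of $G$ covering $Y'$ into one that avoids $X\setminus X_M$ on the $X$-side.

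Let $M'$ be a matching of $G$ covering $Y'$. I would proceed by induction on the number $t(M')$ of vertices of $X\setminus X_M$ that are saturated by $M'$. If $t(M')=0$, then all $X$-endpoints of $M'$ lie in $X_M$, so $M'\subseteq E(G[X_M\cup Y])$ and we are done. Otherwise, pick some $x\in X\setminus X_M$ saturated by $M'$ and consider the symmetric difference $M\oplus M'$, which decomposes into a disjoint union of alternating paths and even alternating cycles. Since $x$ has no $M$-edge but exactly one $M'$-edge (which therefore lies in $M\oplus M'$), the vertex $x$ is an endpoint of some path $P$ in $M\oplus M'$ whose first edge is an $M'$-edge.

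The key step is to analyse the opposite endpoint $z$ of $P$ by parity. If $P$ has odd length, then $z\in Y$ and the last edge of $P$ is an $M'$-edge, so $z$ is unsaturated by $M$ and $P$ is an $M$-augmenting path, contradicting the maximality of $M$. Hence $P$ has even length, $z\in X$, and the unique $P$-edge at $z$ lies in $M$; a short case analysis on whether $z$ has an $M'$-edge (which would have to either coincide with, or add a further edge to, the $M$-edge already accounted for) shows that $z\in X_M$ and $z$ is not saturated by $M'$. Now set $M'':=M'\oplus E(P)$. Then $M''$ is a matching in which every internal $Y$-vertex of $P$ is rematched to its $M$-partner, so $M''$ still saturates every vertex that $M'$ did except $x$ itself; in particular $M''$ still covers $Y'$. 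Because every $X$-vertex newly matched by $M''$ on $P$ is the $X$-endpoint of an $M$-edge, all such vertices lie in $X_M$, hence $t(M'')=t(M')-1$, and induction completes the proof.

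The main obstacle is the endpoint analysis for $P$: one has to exclude both termination at an $M$-unsaturated $Y$-vertex (ruled out by maximality of $M$) and termination at another vertex of $X\setminus X_M$ (ruled out by the parity argument together with the structural observation about $z$'s $M'$-edge). Once these are in place, the swap $M'\mapsto M'\oplus E(P)$ strictly decreases the number of ``bad'' $X$-vertices without harming the coverage of $Y'$, and the rest is routine matching bookkeeping.
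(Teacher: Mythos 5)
The paper does not prove this proposition; it cites it directly as~\cite[Theorem~2]{BodlaenderJK13}, so there is no paper-internal proof to compare against. Your argument is nevertheless correct and is the standard symmetric-difference proof of this fact. The only direction requiring work is $(\Rightarrow)$, and your induction on the number of $M'$-saturated vertices in $X\setminus X_M$ goes through: since such an $x$ is $M$-exposed but $M'$-saturated, it is an endpoint (of degree one) of a path $P$ in $M\oplus M'$ beginning with an $M'$-edge; the odd-length case gives an $M$-augmenting path contradicting maximality of $M$; the even-length case forces the other endpoint $z$ into $X$, saturated by $M$ (hence $z\in X_M$) and not by $M'$ (else $P$ would extend); and $M''=M'\oplus E(P)$ is then a matching that loses only $x$ among all saturated vertices, so it still covers $Y'$ and has $t(M'')=t(M')-1$. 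The endpoint bookkeeping you flagged as the main obstacle is handled correctly.
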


\subsection{Parameterized Complexity}

An instance of a parameterized problem $\Pi$
is a pair $(I,k)$ where $I$ is the \emph{main part} and $k$ is the
\emph{parameter}; the latter is usually a non-negative integer.  
A parameterized problem is
\emph{fixed-parameter tractable} ({\sf FPT}) if there exists a computable function
$f$ such that instances $(I,k)$ can be solved in time $\bigoh(f(k)|{I}|^c)$
where $|I|$ denotes the size of~$I$ and $c$ is an absolute constant. The class of all fixed-parameter
tractable decision problems is called {\sf FPT} and algorithms which run in
the time specified above are called {\sf FPT} algorithms. As in other literature on {\sf FPT} algorithms,
we will often omit the polynomial factor in $\bigoh(f(k)|{I}|^c)$ and write $\bigoh^*(f(k))$ instead.
To establish that a problem under a specific parameterization is not
in {\sf FPT} %(under common complexity-theoretic assumptions) 
we prove that
it is W[1]-hard as it is widely believed that {\sf FPT}$\neq$W[1]. 

%provide
%\emph{parameter-preserving reductions} from problems known to lie in intractable
%classes like~$\sf W[1]$ or~$\sf W[2]$. In such a reduction, an instance~$(I_1,k_1)$
%is reduced in polynomial time to an instance~$(I_2,k_2)$
%where~$k_2 \leq f(k_1)$ for some function~$f$.
A {\em reduction rule} $R$ for a parameterized problem $\Pi$  is an algorithm $A$ that
given an instance $(I,k)$ of a problem $\Pi$ returns an instance $(I',k')$ of
the \emph{same} problem. The reduction rule is said to be {\em safe} if it
holds that  $(I,k) \in \Pi$ if and only if $(I',k') \in \Pi$. If $A$ runs in polynomial time in $|I|+k$ then
$R$ is a {\em polynomial-time reduction rule}. Often we omit the adjectives ``safe'' and ``polynomial-time''
in ``safe polynomial-time reduction rule'' as we consider only such reduction rules.

%A \emph{bi-kernelization}\footnote{A bi-kernelization is also called a generalized kernelization in the literature.} from one parameterized problem $\Pi$ to another $\Pi'$ is a polynomial-time algorithm that given any instance
%$(I,k)$ returns an instance $(I',k')$ such that $(I,k) \in \Pi$ if and only if
%$(I',k') \in \Pi'$ {\em and} $|I'|+k'\leq f(k)$ for some computable function
%$f$. It was observed in \cite{AlonGKSY11} that a decidable parameterized problem $\Pi$ is {\sf FPT} if and only if it admits a bi-kernel from itself to a decidable parameterized problem. 
%It is not hard to show that a decidable parameterized problem is {\sf FPT} if and only if it admits a kernel \cite{cygan2015,downey2013,fomin2019}.
%The function $f$ is called the {\em size} of the bi-kernelization, and we
%have a {\em polynomial bi-kernel} if $f(k)$ is polynomially bounded in $k$.  
%When $\Pi'=\Pi$, then a bi-kernel is a {\em kernel}.

A \emph{kernelization} (or, a {\em kernel}) of a parameterized problem $\Pi$ is a reduction rule
%$(I,k)$ returns an instance $(I',k')$ (called a {\em kernel}) 
such that $|I'|+k'\leq f(k)$ for some computable function
$f$. It is not hard to show that a decidable parameterized problem is {\sf FPT} 
if and only if it admits a kernel \cite{cygan2015,downey2013,fomin2019}.
The function $f$ is called the {\em size} of the kernel, and we
have a {\em polynomial kernel} if $f(k)$ is polynomially bounded in $k$.  

A kernelization can be generalized by considering a reduction (rule) from a 
parameterized problem $\Pi$ to another parameterized problem $\Pi'.$
Then instead of a kernel we obtain a {\em generalized kernel} (also called a bikernel \cite{AlonGKSY11} in the literature).
If the problem $\Pi'$ is not parameterized, then a reduction from $\Pi$ to $\Pi'$ (i.e., $(I,k)$ to $I'$) 
is called a {\em compression}, which is {\em polynomial} if $|I'|\le p(k),$ where $p$ is 
a fixed polynomial in $k.$ If there is a polynomial compression from $\Pi$ to $\Pi'$ and $\Pi'$ is 
polynomial-time reducible back to $\Pi$, then combining the compression with the reduction
gives a polynomial kernel for $\Pi.$

\subsection{Clique Modulator}\label{ssec:cliquemod}

Let $G$ be an undirected graph. We say that a set $D \subseteq V(G)$
is a \emph{clique modulator} for $G$ if $G- D$ is a
clique. Since we will use the size of a smallest clique modulator as a
parameter for our coloring problems, it is natural to ask whether
the following problem can be solved efficiently.

\begin{problem}[$k$]{\CM}
  \Input & A graph $G$ and an integer $k$\\
  \Prob  & Does $G$ have a clique modulator of size at most $k$?
\end{problem}

The following proposition shows that this is indeed the case. Namely,
\CM{} is both {\sf FPT} and can be approximated
within a factor of two. The former is important for our
{\sf FPT} algorithms and the later for our kernelization
algorithms as it allows us to not depend on a clique modulator given
as part of the input.
\begin{proposition}
  \CM{} is fixed-parameter tractable (in time
  $\bigoh^*(1.2738^k)$) and can be approximated within a
  factor of two.
\end{proposition}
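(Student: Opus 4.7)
My plan is to reduce \CM{} to the well-studied \textsc{Vertex Cover} problem on the complement graph and then invoke the known FPT algorithm and approximation for \textsc{Vertex Cover}.

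The key observation is that for any graph $G$ and any set $D \subseteq V(G)$, the set $D$ is a clique modulator of $G$ if and only if $D$ is a vertex cover of the complement graph $\bar{G}$. Indeed, $G - D$ is a clique exactly when every pair of distinct vertices in $V(G)\setminus D$ is adjacent in $G$, which is equivalent to $V(G)\setminus D$ being an independent set in $\bar{G}$, which is in turn equivalent to $D$ covering every edge of $\bar{G}$. Hence, given an instance $(G,k)$ of \CM{}, I would construct $\bar{G}$ in polynomial time and solve (or approximate) \textsc{Vertex Cover} on $(\bar{G},k)$.

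For the FPT claim, I would invoke the algorithm of Chen, Kanj and Xia for \textsc{Vertex Cover} running in time $\bigoh^*(1.2738^k)$, which directly gives the same running time for \CM{}. For the approximation claim, I would use the classical 2-approximation for \textsc{Vertex Cover}: compute any maximal matching $M$ of $\bar{G}$ in polynomial time, output $D = V(M)$, and observe that any vertex cover of $\bar{G}$ must contain at least one endpoint of each edge of $M$, so $|D| = 2|M| \leq 2\cdot \mathrm{OPT}$, where $\mathrm{OPT}$ is the size of a minimum clique modulator of $G$.

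There is essentially no obstacle here: the reduction is a one-line equivalence, and both the parameterized algorithm and the factor-two approximation are standard textbook results for \textsc{Vertex Cover}. The only thing worth being careful about is the direction of the equivalence between clique modulators in $G$ and vertex covers in $\bar{G}$, which is immediate from the definitions.
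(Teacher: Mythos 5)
Your proof is correct and takes essentially the same approach as the paper: reduce to \textsc{Vertex Cover} on $\overline{G}$ and invoke the Chen--Kanj--Xia algorithm and the standard maximal-matching 2-approximation. You simply spell out the equivalence and the approximation argument in more detail than the paper, which states both as immediate.
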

\begin{proof}
  It is straightforward to verify that a graph $G$ has a clique
  modulator of size at most $k$ if and only if the complement
  $\overline{G}$ of $G$ has a vertex cover of size at most
  $k$. The statement now follows from the fact that the vertex cover
  problem is fixed-parameter tractable~\cite{ChenKX10} (in time
  $\bigoh^*(1.2738^k)$) and can be approximated within
  a factor of two~\cite{GareyJ79}.
\end{proof}
%\subsection{Considered Problems}

\section{List Coloring Clique Modulator}
\label{sec:cm}

The following lemma is often used in the design of randomized algorithms. 

\begin{lemma}\label{lem:SZ} (Schwartz-Zippel \cite{Schwartz1980,Zippel1979}). Let $P(x_1, \ldots , x_n)$ be a multivariate polynomial of total
degree at most $d$ over a field $\mathbb{F}$, and assume that $P$ is not identically zero. Pick $r_1,\dots, r_n$
uniformly at random from $\mathbb{F}$. Then Pr$[P(r_1,\ldots ,r_n)=0]\le d/|\mathbb{F}|.$
\end{lemma}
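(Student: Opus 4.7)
The plan is to prove this by induction on the number of variables $n$. For the base case $n=1$, $P$ is a univariate polynomial of degree at most $d$, and any nonzero such polynomial over a field has at most $d$ roots; since $r_1$ is chosen uniformly from $\mathbb{F}$, the probability that $r_1$ is a root is at most $d/|\mathbb{F}|$.

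For the inductive step, I would expand $P$ as a polynomial in $x_1$ with coefficients in $\mathbb{F}[x_2, \ldots, x_n]$, writing $P(x_1, \ldots, x_n) = \sum_{i=0}^{d} x_1^i \, Q_i(x_2, \ldots, x_n)$, where each $Q_i$ has total degree at most $d-i$. Let $k$ be the largest index for which $Q_k$ is not identically zero; such a $k$ exists since $P$ is not identically zero. Applying the inductive hypothesis to $Q_k$, which has $n-1$ variables and total degree at most $d-k$, gives $\Pr[Q_k(r_2, \ldots, r_n) = 0] \le (d-k)/|\mathbb{F}|$. Conditioning on the complementary event that $Q_k(r_2, \ldots, r_n) \ne 0$, the specialized polynomial $P(x_1, r_2, \ldots, r_n)$ has degree exactly $k$ in $x_1$ (its leading coefficient is the nonzero value $Q_k(r_2, \ldots, r_n)$), so by the base case the probability that $r_1$ is one of its roots is at most $k/|\mathbb{F}|$. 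A union bound over the two events then yields
$\Pr[P(r_1, \ldots, r_n) = 0] \le (d-k)/|\mathbb{F}| + k/|\mathbb{F}| = d/|\mathbb{F}|$,
completing the induction.

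The only delicate step is the choice of $k$: it must be the \emph{largest} index with $Q_k \not\equiv 0$ so that, after specializing $x_2, \ldots, x_n$, the resulting univariate polynomial really has degree $k$ whenever $Q_k(r_2, \ldots, r_n) \ne 0$, and simultaneously $Q_k$ must itself be nonzero as a polynomial in $n-1$ variables in order to apply the inductive hypothesis. With this choice, the two contributions $(d-k)/|\mathbb{F}|$ and $k/|\mathbb{F}|$ combine exactly into the desired bound $d/|\mathbb{F}|$, with no slack to worry about. No other creative ingredient seems necessary.
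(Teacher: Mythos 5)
The paper states this lemma as a known result, citing Schwartz and Zippel, and does not supply a proof, so there is no in-paper argument to compare against. Your inductive proof is the standard one and is correct: writing $P = \sum_{i} x_1^i Q_i(x_2,\dots,x_n)$, taking $k$ to be the largest index with $Q_k \not\equiv 0$ (so $\deg Q_k \le d-k$), applying the inductive hypothesis to $Q_k$, and then, on the complementary event, using the base case on the degree-$k$ univariate polynomial $P(x_1,r_2,\dots,r_n)$, the union bound $(d-k)/|\mathbb{F}| + k/|\mathbb{F}| = d/|\mathbb{F}|$ closes the induction with no slack. The one implicit assumption, that $\mathbb{F}$ is finite so that uniform sampling and division by $|\mathbb{F}|$ make sense, is already present in the lemma statement itself.
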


Both parts of the next lemma will be used in this section. The part for fields of characteristic two was proved by Wahlstr{\"o}m \cite{Wahlstrom13}. The part for reals can be proved similarly.

\begin{lemma}\label{lem:coef}%\cite{Wahlstrom13}
Let $P(x_1, \ldots , x_n)$ be a polynomial over a field of characteristic two (over reals, respectively), and $J \subseteq [n]$
a set of indices. For a set $I \subseteq [n],$ define $P_{- I}(x_1,\ldots  , x_n) = P(y_1,\ldots  , y_n),$ where $y_i = 0$
for $i \in I$ and $y_i = x_i$, otherwise. Define
\begin{eqnarray*}
Q(x_1,\ldots , x_n) & = & \sum_{I\subseteq J} P_{- I}(x_1,\ldots , x_n)\\
(Q(x_1,\ldots , x_n) & = & \sum_{I\subseteq J} (-1)^{|I|} P_{- I}(x_1,\ldots , x_n), \mbox{ respectively}).
\end{eqnarray*}
Then for any monomial $T$ divisible by $\Pi_{i\in J} x_i$ we have ${\rm coef}_Q T = {\rm coef}_P T,$ and for
every other monomial $T$ we have ${\rm coef}_Q T = 0.$
\end{lemma}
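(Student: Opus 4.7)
The plan is to prove both parts uniformly by fixing an arbitrary monomial $T = \prod_{i=1}^n x_i^{a_i}$ and computing $\mathrm{coef}_Q T$ directly from the definition of $Q$. Write $\mathrm{supp}(T) = \{\,i : a_i > 0\,\}$ and let $S = J \cap \mathrm{supp}(T)$; the goal is to show that the coefficient is $\mathrm{coef}_P T$ exactly when $J \subseteq \mathrm{supp}(T)$, i.e. when $T$ is divisible by $\prod_{i\in J} x_i$, and is zero otherwise.

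The key observation is that the substitution defining $P_{-I}$ kills every monomial whose support meets $I$ and leaves every other monomial unchanged. Hence $\mathrm{coef}_{P_{-I}} T = \mathrm{coef}_P T$ if $I \cap \mathrm{supp}(T) = \emptyset$, and $\mathrm{coef}_{P_{-I}} T = 0$ otherwise. Since $I$ ranges only over subsets of $J$, the condition $I \cap \mathrm{supp}(T) = \emptyset$ is equivalent to $I \subseteq J \setminus S$. Collecting the contributions I would therefore write
\[
\mathrm{coef}_Q T \;=\; \mathrm{coef}_P T \cdot \sum_{I \subseteq J \setminus S} \epsilon_I,
\]
where $\epsilon_I = 1$ in the characteristic-two version and $\epsilon_I = (-1)^{|I|}$ in the real version.

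It then suffices to evaluate the scalar factor. If $T$ is divisible by $\prod_{i\in J} x_i$, then $J \subseteq \mathrm{supp}(T)$, so $J \setminus S = \emptyset$ and the sum consists only of the $I = \emptyset$ term, contributing $1$; hence $\mathrm{coef}_Q T = \mathrm{coef}_P T$ as required. Otherwise $J \setminus S$ is nonempty, and in the characteristic-two case the sum equals $2^{|J\setminus S|} = 0$, while in the real case the binomial identity gives $\sum_{I \subseteq J\setminus S}(-1)^{|I|} = (1-1)^{|J\setminus S|} = 0$. In either setting $\mathrm{coef}_Q T = 0$, which finishes the proof.

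There is no real obstacle here: the argument is a one-shot monomial computation, and the two cases of the lemma are entirely parallel, differing only in whether the vanishing of the sum on a nonempty index set is witnessed by $2^{m} \equiv 0 \pmod 2$ or by $(1-1)^{m} = 0$ for $m \geq 1$. The only thing to be careful about is ensuring the sum-coefficient-to-coefficient-of-sum interchange is valid, which is immediate because all $P_{-I}$ are ordinary polynomials and the sum defining $Q$ is finite.
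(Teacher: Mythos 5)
Your proof is correct. The paper does not include its own proof of this lemma --- it cites Wahlstr{\"o}m~\cite{Wahlstrom13} for the characteristic-two case and remarks the real case is similar --- and your monomial-by-monomial inclusion--exclusion argument (kill $T$ iff $I$ meets its support, then sum $\epsilon_I$ over $I \subseteq J \setminus \mathrm{supp}(T)$) is precisely the standard derivation one would expect there.
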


Now we are ready to prove the first new result of this section.

\begin{theorem}\label{theorem:slow-fpt}
\LCCM {} can be solved by a randomized algorithm in time $\bigoh^*(2^{k\log k}).$
\end{theorem}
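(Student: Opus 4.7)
My plan is to enumerate partitions of $D$ into color classes and, for each, run a randomized polynomial-identity test (via Lemmas~\ref{lem:SZ} and~\ref{lem:coef}) to decide extensibility in polynomial time. Concretely, I would enumerate all partitions $\mathcal{P}=\{P_1,\ldots,P_p\}$ of $D$ with each $P_i$ independent in $G[D]$ and $L_i:=\bigcap_{v\in P_i}L(v)\ne\emptyset$; any proper list coloring of $G$ induces such a partition on $D$ via its color classes, so this is a sound enumeration. Since $|D|=k$, the Bell number bound gives $B_k\le k^k = 2^{k\log k}$ candidates, and they can be iterated through in $2^{k\log k}\cdot\mathrm{poly}(k)$ time using standard set-partition generation.

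For a fixed $\mathcal{P}$, the remaining task is to choose distinct colors $c_1,\ldots,c_p$ with $c_i\in L_i$ together with distinct colors $d_u\in L(u)$ for every $u\in K:=V(G)\setminus D$ (distinctness within $K$ being forced by $K$ being a clique), subject to $d_u\ne c_i$ whenever $u$ is $G$-adjacent to some vertex of $P_i$; the coincidence $c_i=d_u$ is permitted whenever $u$ has no neighbor in $P_i$. I plan to encode extensibility as the non-vanishing of a polynomial $F_{\mathcal{P}}(\mathbf{x})$ over a field of characteristic two, with variables indexed by colors. The polynomial will combine an Edmonds-style determinant witnessing a list-matching of $\mathcal{P}$ to distinct colors (from the $L_i$'s), an analogous determinant for $K$ (from the $L(u)$'s), and an inclusion--exclusion correction implemented through Lemma~\ref{lem:coef} ranging over the set of adjacent pairs $(P_i,u)$, tuned so that only monomials witnessing valid extensions survive. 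Once the total degree is shown to be polynomial in $n$, Lemma~\ref{lem:SZ} yields a bounded-error randomized test by evaluation at a uniformly random point in a field of size $\mathrm{poly}(n)$.

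The main obstacle will be designing $F_{\mathcal{P}}$ so that its nonzero monomials correspond exactly to valid extensions. A naive product of the two matching determinants would forbid \emph{every} coincidence $c_i=d_u$, which is too restrictive; summing freely over coincidences would lose the bipartite-adjacency information between $\mathcal{P}$ and $K$. My plan is to introduce auxiliary variables that tag each color's usage by $\mathcal{P}$ versus by $K$ separately, and then apply the alternating sum of Lemma~\ref{lem:coef} with index set $J$ equal to the auxiliary variables encoding forbidden (adjacent-pair) coincidences, so that in the resulting $Q$ only monomials in which no such forbidden coincidence occurs contribute. Verifying that $Q$ is nonzero precisely when a valid extension exists is the crux of the argument; once established, combining the per-partition polynomial-time test with the $2^{k\log k}$ partition enumeration gives the claimed $\bigoh^*(2^{k\log k})$ bound.
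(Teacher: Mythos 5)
Your high-level plan matches the paper's: enumerate partitions of $D$ induced by color classes, then decide extensibility per partition by a Schwartz--Zippel test combined with the inclusion--exclusion sieve of Lemma~\ref{lem:coef}. The enumeration step and the Bell-number bound are fine. However, the per-partition test, which you yourself flag as ``the crux,'' has a genuine gap, and it is precisely the place where the paper does something you have not reconstructed.

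Two concrete problems with your proposed $F_{\mathcal{P}}$. First, Lemma~\ref{lem:coef} sieves in the wrong direction for your stated purpose: it produces a polynomial $Q$ that retains exactly those monomials of $P$ that are divisible by $\prod_{i\in J} x_i$, i.e., monomials in which \emph{every} variable in $J$ appears. If $J$ tags the forbidden $(P_i,u)$ coincidences, $Q$ keeps precisely the terms in which \emph{all} forbidden coincidences occur -- the opposite of what you want. Second, even if the direction were fixed, the natural tagging of coincidences lives on triples $(P_i,u,\ell)$ of part, clique vertex, and color, of which there are $\Theta(kn^2)$; a sieve over that many variables is not {\sf FPT} in $k$. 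More fundamentally, a plain product of two Edmonds determinants (one for $\mathcal{P}\to L$, one for $K\to L$) ranges over \emph{pairs of independent matchings}, and there is no cheap post-hoc way to restrict to pairs that agree only on allowed coincidences.

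What the paper does instead is to enlarge the enumeration rather than the sieve: besides the partition of $D$, it also guesses which parts reuse a color from the clique and which do not, writing the guess as a pair $(\mathcal{D},\mathcal{D}')$; this costs only an extra $2^{k}$ factor. Then it builds a \emph{single} bipartite graph $B$ with left side $C\cup\mathcal{D}$ and right side $L$: the non-reusing parts in $\mathcal{D}$ sit on the left and are matched like ordinary vertices, while the reusing parts $D'_1,\dots,D'_t\in\mathcal{D}'$ are ``absorbed'' into the weights of the $C$-to-$L$ edges via variables $x_1,\dots,x_t$, and -- crucially -- the adjacency constraint (``$c$ must not neighbor $D'_j$'') is enforced \emph{structurally}, by simply omitting $x_j$ from the weight of edge $c\ell$ whenever $c$ is adjacent to $D'_j$ or $\ell\notin\bigcap_{d\in D'_j}L(d)$. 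Thus there are no forbidden coincidences to sieve out afterward. Lemma~\ref{lem:coef} is then applied in its correct direction, with $J=\{1,\dots,t\}$ ($t\le k$), to extract the monomial $\prod_{j=1}^{t}x_j$ from $\det M$, certifying that every reusing part actually found a compatible clique vertex. This makes the per-guess cost $\bigoh^*(2^t)\le\bigoh^*(2^k)$ and the total $\bigoh^*(\mathcal{B}_k\cdot 2^k)=\bigoh^*(2^{k\log k})$. Without the extra guess of $(\mathcal{D},\mathcal{D}')$ -- or, alternatively, the more elaborate single-matrix encoding with dummy colors used in the paper's $\bigoh^*(2^k)$ algorithm -- I do not see how to define an $F_{\mathcal{P}}$ of the kind you describe.
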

\begin{proof}
  Let $L=\bigcup_{V \in V(G)}L(v)$ and $C=G-D$. We say that a proper list coloring
  $\lambda$ for $G$ is compatible with $(\mathcal{D},\mathcal{D}')$
  if:
  \begin{itemize}
  \item $\mathcal{D}=\{D_1,\dotsc,D_p\}$ is the partition of all
    vertices in $D$ that do not reuse colors used by $\lambda$ in $C$
    into color classes given by $\lambda$ and
  \item $\mathcal{D}=\{D_1',\dotsc,D_t'\}$ is the partition of all
    vertices in $D$ that do reuse colors used by $\lambda$ in $C$
    into color classes given by $\lambda$.
  \end{itemize}
  Note that $\{D_1,\dotsc,D_p,D_1',\dotsc,D_t'\}$ is the partition of
  $D$ into colorclasses given by $\lambda$.
  
  For a given pair $(\mathcal{D},\mathcal{D}')$, we will now construct
  a bipartite graph $B$ (with weights on its
  edges) such that $B$ has a perfect matching satisfying certain
  additional properties if and only if $G$ has a proper list coloring
  that is compatible with $(\mathcal{D},\mathcal{D}')$. 
  $B$ has bipartition $(C \cup \{D_1,\dotsc,D_p\},L)$ and an edge
  between a vertex $c\in C$ and a vertex $\ell
  \in L$ if and only if $\ell \in L(u)$. Moreover, $B$ has an edge
  between a vertex $D_i$ and a
  vertex $\ell \in L$ if and only if $\ell \in \bigcap_{d \in
    D_i}L(d)$. Finally,  if
  $c\in C$ and $\ell \in L$, then assign the edge $c\ell$ weight $\sum_{j\in J} x_j$,
  where $x_j$'s are variables and $j\in J$ if and only if 
  $\ell \in (\bigcap_{d \in D_j'}L(d))\cap L(c)$
  and $c$ is not adjacent to any vertex in $D_j'$. All other edges in
  $B$ are given weight $1$. In the following we will assume that $B$
  is balanced; if this is not the case then we simply add the right
  amount of dummy vertices to the smaller side and make them adjancent
  (with an edge of weight 1) to all vertices in the opposite
  side. Note that $B$ has a perfect matching $M$ such that there is a 
  bijection $\alpha$ between $[t]$ and $t$ edges in $M$ such that for every
  $i \in [t]$, the weight of the edge $\alpha(i)$ contains the term
  $x_i$ if and only if $G$ has a proper list coloring that is
  compatible with $(\mathcal{D},\mathcal{D}')$.

  Let $M$ be the weighted incidence matrix of $B$, i.e., $M$ is an
  $|V(B)/2|\times |V(B)/2|$ matrix such that its entries $L_{i,j}$ 
  equal to the weight of the edge between the $i$-th vertex on one
  side and the $j$-th vertex on the other side of $B$ if it exists and $L_{i,j}=0$
  otherwise.

  Note that the permanent per$(M)$ of $M$ equals to the sum of the
  products of entries of $M$, where each product corresponds to a
  perfect matching $Q$ of $B$ and is equal to the product of the entries
  of $M$ corresponding to the edges of $Q$. Some of the entries of $M$
  contain sums of variables $x_j$, $j\in [t]$ and thus per$(M)$  is a
  polynomial in these variables.
  
  Now it is not hard to see that per$(M)$ contains the monomial
  $\prod_{j=1}^tx_j$ if and only if $B$ has a perfect matching $M$ such that there is a 
  bijection $\alpha$ between $[t]$ and $t$ edges in $M$ such that for every
  $i \in [t]$, the weight of the edge $\alpha(i)$ contains the term
  $x_i$, which in turn is equivalent to $G$ having a proper list coloring that is
  compatible with $(\mathcal{D},\mathcal{D}')$.

  Hence, deciding whether $G$ has a proper list coloring that is
  compatible with $(\mathcal{D},\mathcal{D}')$ boils down to deciding
  whether the permant of $M$ contains the monomial $\prod_{j=1}^t x_j$.
  For any evaluation of
  variables $x_j$, we can compute per$(M)$ over the field of
  characteristic two by replacing permanent with determinant, which
  can be computed in polynomial-time~\cite{BunchHopcroft74}. 

  Now let $P(x_1,\dots ,x_t)={\rm det}(M)$ and $Q(x_1,\ldots , x_t)
  =\sum_{I\subseteq [t]} P_{- I}(x_1,\ldots , x_t).$
  Note that $Q(x_1,\dotsc,x_t)\neq 0$ if and only if det$(M)$ contains the
  monomial $\prod_{j=1}^t x_j$. Moreover, using Lemmas
  \ref{lem:SZ} and \ref{lem:coef} (with $P$ and $Q$ just defined), we
  can verify in time $\bigoh^*(2^t)$ whether $Q(x_1,\dotsc,x_t)=0$ (i.e. whether
  det$(M)$ contains the monomial $\prod_{j=1}^t x_j$) with probability
  at least $1-\frac{t}{|\mathbb{F}|}\ge 1-\frac{1}{t}$ for a field
  $\mathbb{F}$ of characteristic 2 such that $|\mathbb{F}|\ge t^2$.

  Our algorithm sets $t=k$ and for every pair
  $(\mathcal{D},\mathcal{D}')$, where $\mathcal{D}\cup\mathcal{D}'$ is
  a partition of $D$ into independent sets, constructs graph $B$ and matrix $M$. It then
  verifies in time $\bigoh^*(2^t)$ whether $Q(x_1,\dotsc,x_t)=0$ and if $Q(x_1,\dotsc,x_t)\ne
  0$ it returns `Yes'
  and terminates. If the algorithm runs to the end, it returns `No'.

  Note that the time complexity of the algorithm is dominated by the
  number of choices for $(\mathcal{D},\mathcal{D}')$, which is in turn
  dominated by
  $\bigoh^*({\cal B}_k),$ where ${\cal B}_k$ is the $k$-th Bell
  number. By Berend and Tassa \cite{BerendT10}, ${\cal B}_k<(\frac{0.792k}{\ln
    (k+1)})^k,$ and thus $\bigoh^*({\cal B}_k)=\bigoh^*(2^{k\log
    k}).$

  \newcommand{\old}[1]{}
  \old{
  Let $L=\cup_{v\in V(G)} L(v).$  Suppose there is a proper list coloring $\lambda$ of $G.$ 
  Let $D=G-V(C)$, $D'$  the vertices of $D$ which use the same colors as
  some vertices of $C$ in coloring $\lambda$, and $C'=D\setminus D'.$ Consider
  a partition $(D'_1,\dots ,D'_t)$ of  $D'$ into non-empty subsets each
  of which includes all vertices of $D'$ of the same color in $\lambda$ and
  let $L_i=\cap_{v\in D'_i}L(v)$ for all $i\in [t].$ Construct a
  bipartite graph $B$ with bipartition $(C\cup C',L)$ such that
  $u\ell\in E(B)$, where $u\in C\cup C'$ and $\ell\in L$ if and only if
  $\ell\in L(u).$ If $u\in C'$ we assign the edge $u\ell$ weight 1.  If
  $u\in C$ then assign the edge $u\ell$ weight $\sum_{j\in J} x_j$,
  where $x_j$'s are variables and $j\in J$ if and only if 
  $\ell \in L_j\cap L(u)$
  and $u$ is not adjacent to any vertex in $D_j'$. If 
  $\ell\not\in L_j\cap L(u)$ or $u$ is adjacent to some vertex in $D_j'$,
  then assign weight 1 to $u\ell.$

  %Since $\lambda$ exists, $|C\cup C'|\le |L|.$
  We ensure that $B$ is balanced, by adding dummy vertices to the part
  containing $C\cup C'$ or the part containing $L$ and making those
  dummy vertices adjacent (via an edge with weight 1) to all vertices
  of the opposite side.
  % To make $B$ balanced add dummy
  % vertices to the partite set $C\cup C'$ (resulting in a set
  % $\{u_1,\dots ,u_{|L|}\}$), make them adjacent to every vertex in $L$
  % and assign weight 1 to all such edges.
  
  Note that it is now easy to see that $G$ has a proper list coloring
  Let $M$ be an $|L|\times
  |L|$-matrix corresponding to $B$ such that its entries $L_{i,j}$ are
  equal to the weight of the edge between the $i$-th vertex on one
  side and the $j$-th vertex on the other side of $B$ if it exists and $L_{i,j}=0,$
  otherwise.

  Note that the permanent per$(M)$ of $M$ equals to the sum of the
  products of entries of $M$, where each product corresponds to a
  perfect matching $Q$ of $B$ and equal to the product of the entries
  of $M$ corresponding to the edges of $Q.$ Some of the entries of $M$
  contain sums of variables $x_j$, $j\in [t]$ and thus per$(M)$  is a
  polynomial in these variables.
  
  By the definitions of per$(M)$ and the partition $(D'_1,\dots ,D'_t),$ as a polynomial per$(M)$ must contain a monomial $\prod_{j=1}^t x_j.$ Now it is not hard to see that per$(M)$ contains the monomial $\prod_{j=1}^t x_j$ if and only if $G$ has a proper list coloring with $n-t$ colors. For any evaluation of variables $x_j$, we can compute per$(M)$ over the field of characteristic two by replacing permanent with determinant. Then the computation can be done in polynomial time. 

  Now let $P(x_1,\dots ,x_t)={\rm det}(M)$ and $Q(x_1,\ldots , x_t) =\sum_{I\subseteq [t]} P_{- I}(x_1,\ldots , x_t).$ Using Lemmas \ref{lem:SZ} and \ref{lem:coef} (with $P$ and $Q$ just defined), we can verify in time $\bigoh^*(2^t)$ whether det$(M)=0$ contains the monomial $\prod_{j=1}^t x_j$ with probability at least $1-\frac{t}{|\mathbb{F}|}\ge 1-\frac{1}{t}$ for a field $\mathbb{F}$ of characteristic 2 such that $|\mathbb{F}|\ge t^2.$

  Our algorithm sets $t=k$ and for every subset  $D'$ of $D$ and every partition $(D'_1,\dots ,D'_t)$ of $D',$ where all parts $D_i$ are independent sets, constructs graph $B$ and matrix $M$. It then verifies  whether det$(M)=0$ and if  det$(M)\ne 0$ it returns `Yes' and terminates. If the algorithm runs to the end, it returns `No'. 

  Note that the time complexity of the algorithm is dominated by the number of choices for $D'$ and the partitions of $D'$, which is $\bigoh^*(2^k{\cal B}_k),$ where ${\cal B}_k$ is the $k$th Bell number.  
  By Berend and Tassa \cite{BerendT10}, ${\cal B}_k<(\frac{0.792k}{\ln (k+1)})^k,$ and thus 
  $\bigoh^*(2^k{\cal B}_k)=\bigoh^*(2^{k\log k}).$
}
\end{proof}

\subsection{A faster FPT algorithm}

We now show a faster {\sf FPT} algorithm, running in
time~$\bigoh^*(2^k)$. It is a variation on the same algebraic sieving
technique as above, but instead of guessing a partition of the
modulator it works over a more complex matrix. We begin by defining
the matrix, then we show how to perform the sieving step in
$\bigoh^*(2^k)$ time.

\subsubsection{Matrix definition} 

As before, let $L=\bigcup_{v \in V(G)} L(v)$ be the set of all
colours, and let $C=G-D$.  Define an
auxiliary bipartite graph $H=(U_H \cup V_H, E_H)$ where initially
$U_H=V(G)$ and $V_H=L$, and where $v\ell \in E_H$ for $v \in V(G)$,
$\ell \in L$ if and only if $\ell \in L(v)$.  Additionally, introduce
a set $L'=\{ \ell_d' \SM d \in D\}$ of $k$ artificial colours, add
$L'$ to $V_H$, and for each $d\in D$ connect $\ell_d'$ to $d$ but
to no other vertex.  Finally, pad $U_H$ with $|V_H|-|U_H|$ artificial
vertices connected to all of $V_H$; note that this is a nonnegative
number, since otherwise $|L|<|V(C)|$ and we may reject the instance.

Next, we associate with every edge $v\ell \in E_H$ a set $S(v\ell) \subseteq 2^D$ as follows.
\begin{itemize}
\item If $v \in V(C)$, then $S(v\ell)$ contains all sets $S \subseteq D$ such that the following hold:
  \begin{enumerate}
  \item $S$ is an independent set in $G$
  \item $N(v) \cap S = \emptyset$
  \item $\ell \in \bigcap_{s \in S} L(s)$.
  \end{enumerate}
\item If $v \in D$ and $\ell \in L$, then $S(v\ell)$ contains all sets $S \subseteq D$ such that the following hold:
  \begin{enumerate}
  \item $v \in S$
  \item $S$ is an independent set in $G$
  \item $\ell \in \bigcap_{s \in S} L(s)$.
  \end{enumerate}
\item If $v$ or $\ell$ is an artificial vertex -- in particular, if
  $\ell=\ell_d'$ for some $d \in D$ -- then $S(v\ell)=\{\emptyset\}$.
\end{itemize}
Finally, define a matrix $A$ of dimensions $|U_H| \times |V_H|$, with
rows labelled by $U_H$ and columns labelled by $V_H$, whose entries
are polynomials as follows. Define a set of variables $X=\{x_d\SM d\in
D\}$ corresponding to vertices of $D$, and additionally a set $Y=\{y_e \mid e \in E_H\}$. Then for every edge $v\ell$ in $H$, $v \in U_H$, $\ell \in V_H$ we define
\[
P(v\ell)=\sum_{S \in S(v\ell)} \prod_{s \in S} x_s,
\]
where as usual an empty product equals $1$. Then for each edge $v\ell \in E_H$ we let $A[v,\ell] = y_{v\ell} P(v\ell)$, and the remaining entries of $A$ are $0$. 
We argue the following. (Expert readers may note although the argument can be sharpened to show the existence of a multilinear term, we do not wish to argue that there exists such a term with odd coefficient. Therefore we use the simpler sieving of Lemma~\ref{lem:coef} instead of full multilinear detection, cf.~\cite{cygan2015}.)

\begin{lemma} 
  \label{lemma:matrix-solution}
  Let $A$ be defined as above. Then $\det A$ (as a polynomial) contains a monomial divisible by $\prod_{x \in X} x$ if and only if $G$ is properly list colorable. 
\end{lemma}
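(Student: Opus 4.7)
The plan is to interpret $\det A$ combinatorially as a sum over perfect matchings of $H$, use the fresh variables $y_e$ to prevent cancellation between distinct matchings, and then reduce the divisibility by $\prod_{d\in D}x_d$ for a single matching to the existence of a $D$-covering family $(S_e)_{e\in M}$, which in turn corresponds precisely to the existence of a proper list coloring of $G$. More concretely, expanding $\det A$ by the Leibniz formula, each nonvanishing term corresponds to a perfect matching $M$ of $H$ and contributes $\pm\, y^{M}\prod_{e\in M}P(e)$, where $y^{M}=\prod_{e\in M}y_e$. Because every variable $y_e$ appears in exactly one entry of $A$, the $y$-monomials arising from distinct matchings are disjoint, so no cancellation between matchings occurs. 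Hence $\det A$ contains a monomial divisible by $\prod_{d\in D}x_d$ if and only if there is a perfect matching $M$ of $H$ such that $\prod_{e\in M}P(e)$ contains such a monomial; expanding $P(e)=\sum_{S\in S(e)}\prod_{s\in S}x_s$, this is equivalent to the existence of a choice $S_e\in S(e)$ for each $e\in M$ with $\bigcup_{e\in M}S_e\supseteq D$.

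For the forward direction, let $\lambda$ be a proper list coloring of $G$ with color classes $K_1,\ldots,K_p$. Since $V(C)$ induces a clique, each $K_i$ contains at most one vertex of $V(C)$. Pick a representative $r_i\in K_i$: the unique $V(C)$-vertex of $K_i$ if one exists, otherwise an arbitrary vertex of $K_i\subseteq D$. Build a matching $M$ by matching each $r_i$ to $\lambda(r_i)\in L$, each non-representative $d\in D$ to $\ell_d'$, and letting the artificial vertices in $U_H$ absorb all remaining elements of $V_H$; this is possible because the artificial vertices are universal on the $U_H$-side and their number was chosen to balance $|U_H|=|V_H|$, and a short count verifies that both leftover sides have the same cardinality. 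Now set $S_{r_i\lambda(r_i)}=K_i\cap D$ and $S_e=\emptyset$ for every other $e\in M$. Each $K_i\cap D$ is independent (lying in a single color class), disjoint from $N(r_i)$ (same reason), and lies in $\bigcap_{d\in K_i\cap D}L(d)$ because $\lambda$ is a list coloring, so $S_{r_i\lambda(r_i)}\in S(r_i\lambda(r_i))$. The resulting contribution to $\prod_{e\in M}P(e)$ is precisely $\prod_{d\in D}x_d$, witnessing the desired monomial in $\det A$.

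Conversely, given $M$ and $(S_e)_{e\in M}$ with $\bigcup_{e\in M}S_e\supseteq D$, I would define $\lambda$ as follows. For every $v\in V(G)$ matched in $M$ to some $\ell\in L$, set $\lambda(v)=\ell$; the edge definition of $H$ guarantees $\ell\in L(v)$. This handles every $v\in V(C)$ (it cannot be matched to $L'$) and every $d\in D$ matched into $L$. For each remaining $d\in D$, namely those matched to $\ell_d'$, pick any edge $e=w\ell\in M$ with $d\in S_e$ (at least one exists by the covering assumption) and set $\lambda(d)=\ell$; the definition of $S(e)$ ensures $\ell\in L(d)$. To check properness, suppose two distinct vertices $u,v\in V(G)$ are assigned the same color $\ell\in L$. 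Since $M$ is a matching there is a unique edge $e=w\ell\in M$, and both $u$ and $v$ must lie in $\{w\}\cup S_e$ (with $w\in S_e$ automatic when $w\in D$). Independence of $S_e$, together with $N(w)\cap S_e=\emptyset$ when $w\in V(C)$, forces $u$ and $v$ to be non-adjacent.

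The main obstacle I anticipate is the combinatorial bookkeeping: carefully counting leftover vertices in the forward direction to confirm that the partial matching extends to a perfect one via the artificial vertices, and handling in the backward direction the fact that a given $d\in D$ may appear in several $S_e$ simultaneously (the tie-breaking rule above resolves this). Note that the parenthetical remark in the lemma statement explicitly waives any argument about the parity of coefficients, so the lemma is phrased as a pure divisibility condition over $\mathbb{Z}$; the characteristic-two treatment is deferred to the sieving step based on Lemma~\ref{lem:coef}.
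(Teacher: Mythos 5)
Your proof is correct and follows essentially the same approach as the paper's: use the fresh variables $y_e$ to isolate contributions of distinct perfect matchings, characterize the target monomial by the existence of a perfect matching $M$ and a family $(S_e)_{e\in M}$ with $S_e\in S(e)$ covering $D$, and translate this back and forth with proper list colorings by selecting one representative vertex per color class. The only cosmetic difference is the tie-breaking rule in the backward direction (the paper fixes a global vertex ordering $\prec$ with $V(C)$ first and always takes the earliest edge in which $x_d$ appears; you give priority to a direct matching edge for $d$ and then break ties arbitrarily), and both rules lead to the same crucial observation that all vertices assigned a given color $\ell$ lie in $\{w\}\cup S_{w\ell}$ for the unique $w\ell\in M$, which is an independent set by construction of $S(w\ell)$.
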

\begin{proof}
  We first note that no cancellation happens in $\det A$.  Note that monomials of $\det A$ correspond (many-to-one) to perfect matchings of $H$, and thanks to the formal variables $Y$, two monomials corresponding to distinct perfect matchings never interact. On the other hand, if we fix a perfect matching $M$ in $H$, then the contributions of $M$ to $\det A$ equal $\sigma_M \prod_{e \in M} y_e P(e)$, where $\sigma_M \in \{1,-1\}$ is a sign term depending only on $M$. Since the polynomials $P(e)$ contain only positive coefficients, no cancellation occur, and every selection of a perfect matching $M$ of $H$ and a factor from every polynomial $P(e)$, $e \in M$ results (many-to-one) to a monomial with non-zero coefficient in $\det A$.

  We now proceed with the proof.  On the one hand, let $c$ be a proper list coloring of $G$. Define an ordering $\prec$ on $V(G)$ such that $V(C)$ precedes $D$, and define a matching $M$ as follows. For every vertex $v \in V(C)$, add $vc(v)$ to $M$. For every vertex $v \in D$, add $vc(v)$ to $M$ if $v$ is the first vertex according to $\prec$ that uses color $c(v)$, otherwise add $v\ell_v'$ to $M$. Note that $M$ is a matching in $H$ of $|V(G)|$ edges.  Pad $M$ to a perfect matching in $H$ by adding arbitrary edges connected to the artificial vertices in $U_H$; note that this is always possible. Finally, for every edge $v\ell \in M$ with $\ell \in L$ we let $D_{v\ell}=D \cap c^{-1}(\ell)$.  Observe that for every edge $v\ell$ in $M$, $D_{v\ell} \in S(v\ell)$; indeed, this holds by construction of $S(v\ell)$ and since $c$ is a proper list coloring.  Further let $p_{v\ell}=\prod_{v \in D_{v\ell}} x_v$; thus $p_{v\ell}$ is a term of $P(v\ell)$.  It follows, by the discussion in the first paragraph of the proof, that
\[
\alpha \sigma_M \prod_{v\ell \in M} y_{v\ell} p_{v\ell}
\]
is a monomial of $\det A$ for some constant $\alpha > 0$, where $\sigma_M \in \{1, -1\}$ is the sign term for $M$.  It remains to verify that every variable $x_d \in X$ occurs in some term $p_{v\ell}$. Let $\ell=c(d)$ and let $v$ be the earliest vertex according to $\prec$ such that $c(v)=\ell$.  Then $v\ell \in M$ and $x_d$ occurs in $p_{v\ell}$.  This finishes the first direction of the proof.

On the other hand, assume that $\det A$ contains a monomial $T$ divisible by $\prod_{x \in X} x$, and let $M$ be the corresponding perfect matching of $H$.  Let $T=\alpha \prod_{e \in M} y_e p_e$ for some constant factor $\alpha$, where $p_e$ is a term of $P(e)$ for every $e \in M$.  Clearly such a selection is possible; if it is ambiguous, make the selection arbitrarily.  Now define a mapping $c \colon V(G) \to L$ as follows.  For $v \in V(C)$, let $v\ell \in M$ be the unique edge connected to $v$, and set $c(v)=\ell$.  For $v \in D$, let $v'$ be the earliest vertex according to $\prec$ such that $x_v$ occurs in $p_{v'\ell}$, where $v'\ell \in M$. Set $c(v)=\ell$.  We verify that $c$ is a proper list coloring of $G$. First of all, note that $c(v)$ is defined for every $v \in V(G)$ and that $c(v) \in L(v)$.  Indeed, if $v \in V(C)$ then $c(v) \in L(v)$ since $vc(v) \in E_H$; and if $v \in D$ then $c(v) \in L(v)$ is verified in the creation of the term $p_{vc(v)}$ in $P(vc(v))$.  Next, consider two vertices $u, v \in V(G)$ with $c(u)=c(v)$.  If $u, v \in D$, then $u$ and $v$ are represented in the same term $p_{v'c(v)}$ for some $v'$, hence $u$ and $v$ form an independent set; otherwise assume $u \in V(C)$.  Note that $u, v \in V(C)$ is impossible since otherwise the matching $M$ would contain two edges $uc(u)$ and $vc(u)$ which intersect.  Thus $v \in D$, and $v$ is represented in the term $p_{uc(u)}$.  Therefore $uv \notin E(G)$, by construction of $P(uc(u))$. 
We conclude that $c$ is a proper coloring respecting the lists $L(v)$, i.e., a proper list coloring. 
\end{proof}

\subsubsection{Fast evaluation} 

By the above description, we can test for the existence of a list coloring of $G$ using $2^k$ evaluations of $\det A$, as in Theorem~\ref{theorem:slow-fpt}; and each evaluation can be performed in $\bigoh^*(2^k)$ time, including the time to evaluate the polynomials $P(v\ell)$, making for a running time of $\bigoh^*(4^k)$ in total (or $\bigoh^*(3^k)$ with more careful analysis).  We show how to perform the entire sieving in time $\bigoh^*(2^k)$ using fast subset convolution.  

For $I \subseteq D$, let us define $A_{-I}$ as $A$ with all occurrences of variables $x_i$, $i \in I$ replaced by 0, and for every edge $v\ell$ of $H$, let $P(v\ell)_{-I}$ denote the polynomial $P(v\ell)$ with $x_i$, $i \in I$ replaced by 0. Then a generic entry $(v,\ell)$ of $A_{-I}$ equals
\[
A_{-I}[v,\ell] = y_{v\ell} P_{-I}(v\ell),
\]
and in order to construct $A_{-I}$ it suffices to precompute the value of $P_{-I}(v\ell)$ for every edge $v\ell \in E_H$, $I \subseteq D$.  For this, we need the  \emph{fast zeta transform} of Yates~\cite{Yates37}, which was introduced to exact algorithms by Bj\"orklund et al.~\cite{BjHuKo09}.

\begin{lemma}[\cite{Yates37,BjHuKo09}] \label{lemma:fastzeta}
Given a function $f: 2^N \rightarrow R$ for some ground set $N$ and ring $R$,
we may compute all values of $\hat f: 2^N \rightarrow R$ defined as
$\hat f(S) = \sum_{A \subseteq S} f(A)$ using $\bigoh^*(2^{|N|})$ ring operations. 
\end{lemma}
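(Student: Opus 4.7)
The plan is to prove this via Yates' classical dynamic-programming scheme, processing the ground set one element at a time. I would fix an arbitrary ordering $N=\{n_1,\dotsc,n_m\}$ with $m=|N|$, and for each $i \in \{0,1,\dotsc,m\}$ introduce an intermediate function $g_i \colon 2^N \to R$ which performs the zeta sum over the first $i$ coordinates only. Concretely, $g_i(S)$ is defined to sum $f(A)$ over all $A \subseteq S$ that agree with $S$ on the suffix $\{n_{i+1},\dotsc,n_m\}$ and may differ arbitrarily on the prefix $\{n_1,\dotsc,n_i\}$. Then $g_0 = f$ by definition, while $g_m = \hat f$ is exactly the desired transform, so it suffices to describe how to pass from $g_{i-1}$ to $g_i$.

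The key step is establishing a two-term recurrence. For $S \subseteq N$ with $n_i \notin S$, no $A \subseteq S$ contains $n_i$, so $g_i(S) = g_{i-1}(S)$. For $S$ with $n_i \in S$, I would split the sum defining $g_i(S)$ according to whether $n_i \in A$: the subsets containing $n_i$ contribute exactly $g_{i-1}(S)$, and the subsets omitting $n_i$ contribute exactly $g_{i-1}(S \setminus \{n_i\})$, giving $g_i(S) = g_{i-1}(S) + g_{i-1}(S \setminus \{n_i\})$. Verifying this splits precisely according to the definitions of $g_i$ and $g_{i-1}$ is the only real content of the argument.

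Finally I would store each $g_i$ as a table of $2^m$ entries indexed by subsets of $N$, and apply the recurrence above. Each pass from $g_{i-1}$ to $g_i$ performs at most one addition per subset $S \subseteq N$, costing $\bigoh(2^m)$ ring operations; after $m$ passes the total is $\bigoh(m \cdot 2^m) = \bigoh^*(2^{|N|})$, as claimed. There is no genuine obstacle here: the result is classical, and the proof reduces to (i) choosing the right intermediate objects $g_i$, (ii) verifying the split-by-$n_i$ recurrence, and (iii) a one-line cost analysis, so I would present it compactly without further embellishment.
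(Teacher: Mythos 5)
Your proposal is correct and is exactly the classical Yates dynamic-programming argument for the fast zeta transform. Note that the paper does not actually prove Lemma~\ref{lemma:fastzeta}; it cites it as known from~\cite{Yates37,BjHuKo09}, so there is no in-paper proof to compare against. Your intermediate functions $g_i$, the two-case recurrence (with the split on whether $n_i \in A$ when $n_i \in S$), and the $\bigoh(m \cdot 2^m)$ cost analysis are all standard and all check out.
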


We show the following lemma, which is likely to have analogs in the literature, but we provide a short proof for the sake of completeness. 

\begin{lemma}
  \label{lemma:zeta-precompute}
  Given an evaluation of the variables $X$, the value of $P_{-I}(v\ell)$ can be computed for all $I \subseteq D$ and all $v\ell \in E_H$ in time and space $\bigoh^*(2^k)$.
\end{lemma}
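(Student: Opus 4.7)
The plan is to express each polynomial $P_{-I}(v\ell)$, for a fixed evaluation of the variables in $X$, as the value of a zeta transform on the ground set $D$, and then invoke Lemma~\ref{lemma:fastzeta}.

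After substituting the supplied values for the variables $x_d$, $d\in D$, I would define for each edge $v\ell \in E_H$ a function $f_{v\ell}\colon 2^D \to \mathbb{F}$ by $f_{v\ell}(S) = \prod_{s\in S} x_s$ if $S \in S(v\ell)$ and $f_{v\ell}(S) = 0$ otherwise. Zeroing $x_i$ for every $i\in I$ annihilates precisely those terms of $P(v\ell)$ indexed by sets that meet $I$, so
\[
P_{-I}(v\ell) \;=\; \sum_{S\in S(v\ell),\, S\cap I=\emptyset} \prod_{s\in S} x_s \;=\; \sum_{T\subseteq D\setminus I} f_{v\ell}(T) \;=\; \hat{f}_{v\ell}(D\setminus I).
\]
Thus once the functions $f_{v\ell}$ are tabulated, reading off every value of $P_{-I}(v\ell)$ reduces to the fast zeta transform applied separately to each $f_{v\ell}$.

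The remaining step is to tabulate $f_{v\ell}(S)$ for all $v\ell\in E_H$ and all $S\subseteq D$. Given $v\ell$ and $S$, deciding membership $S\in S(v\ell)$ consists of a constant number of polynomial-time tests (independence of $S$ in $G$, the adjacency or containment condition on $v$, and the list-intersection check $\ell\in\bigcap_{s\in S}L(s)$), and once membership holds the product $\prod_{s\in S} x_s$ is computed in $\bigoh(k)$ field operations. Summed over the $2^k$ subsets and the polynomially many edges of $H$, the tabulation costs $\bigoh^*(2^k)$ time and space. Then Lemma~\ref{lemma:fastzeta} computes $\hat{f}_{v\ell}$ in $\bigoh^*(2^k)$ ring operations per edge, and since there are polynomially many edges the whole precomputation stays within $\bigoh^*(2^k)$ time and space.

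The only conceptual subtlety is noticing that ``setting $x_i=0$ for $i\in I$'' corresponds not to a zeta-transform query at $I$ but at its complement $D\setminus I$; apart from this bookkeeping there is no real obstacle, since the heavy lifting is delegated to the fast zeta transform.
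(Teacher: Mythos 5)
Your proof is correct and follows essentially the same route as the paper: you define for each edge the function $f(S)=[S\in S(v\ell)]\prod_{s\in S}x_s$, observe $P_{-I}(v\ell)=\hat f(D\setminus I)$, and invoke the fast zeta transform once per edge. The only difference is that you spell out the polynomial-time tabulation of $f$ and the complementation $I\mapsto D\setminus I$, both of which the paper leaves implicit.
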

\begin{proof}
  Consider an arbitrary polynomial $P_{-I}(v\ell)$.  %Let us write
  % $D_I=\{v_i \in D \mid i \in I\}$.
  Recalling $P(v\ell)=\sum_{S \in S(v\ell)} \prod_{s \in S} x_s$, we have
\[
P_{-I}(v\ell) = \sum_{S \in S(v\ell)} [S \cap I = \emptyset] \prod_{s \in S} x_s =
 \sum_{S \subseteq (D-I)} [S \in S(v\ell)] \prod_{s \in S} x_s,
\]
using Iverson bracket notation.\footnote{Recall that for a logical proposition $P$, $[P]=1$ if $P$ is true and 0, otherwise.}  Using $f(S)=[S \in S(v\ell)]\prod_{s \in S} x_s$, this clearly fits the form of Lemma~\ref{lemma:fastzeta}, with $\hat f(D-I)=P_{-I}(v\ell)$. Hence we apply Lemma~\ref{lemma:fastzeta} for every edge $v\ell \in E_H$, for $\bigoh^*(2^k)$ time per edge, making $\bigoh^*(2^k)$ time in total to compute all values.
\end{proof}

Having access to these values, it is now easy to complete the algorithm.

\begin{theorem}
  \label{theorem:cliquemod-fast}
  \LCCM{} can be solved by a randomized algorithm in time $\bigoh^*(2^k)$.
\end{theorem}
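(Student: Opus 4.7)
The plan is to combine Lemma~\ref{lemma:matrix-solution}, Lemma~\ref{lemma:zeta-precompute}, and the sieving identity of Lemma~\ref{lem:coef}. First I would work over a field $\mathbb{F}$ of characteristic two whose size is polynomial but at least $(\deg \det A)^2$. Applying Lemma~\ref{lem:coef} with $J=D$ (identifying the vertices of $D$ with the indices of the variables in $X$) and $P=\det A$, define
\[
Q \;=\; \sum_{I\subseteq D}(\det A)_{-I}.
\]
By Lemma~\ref{lem:coef}, $Q$ retains exactly those monomials of $\det A$ that are divisible by $\prod_{d\in D}x_d$, with unchanged coefficients. By Lemma~\ref{lemma:matrix-solution} (and the no-cancellation fact established inside its proof, which guarantees that the relevant coefficients survive modulo two), this gives $Q\not\equiv 0$ if and only if $G$ admits a proper list coloring.

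Second, I would evaluate $Q$ at a uniformly random point of $\mathbb{F}^{|X|+|Y|}$. Once the $X$-variables are fixed to their random values, Lemma~\ref{lemma:zeta-precompute} produces the scalar $P_{-I}(v\ell)$ for every subset $I\subseteq D$ and every edge $v\ell\in E_H$ in total time $\bigoh^*(2^k)$. For each of the $2^k$ subsets $I$ one can then assemble $A_{-I}$ from the precomputed table (substituting the random $Y$-values as well) and compute $\det A_{-I}$ in polynomial time by Gaussian elimination. Summing these $2^k$ determinants yields the evaluation of $Q$ at the chosen random point; the whole sieving step therefore runs in $\bigoh^*(2^k)$ time and linear space per subset.

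Third, I would invoke Lemma~\ref{lem:SZ}: the total degree of $Q$ is bounded by $k\cdot|V_H|$, so if $Q\not\equiv 0$ then the random evaluation is nonzero with probability at least $1-\deg(Q)/|\mathbb{F}|\geq 1/2$ for the chosen $|\mathbb{F}|$, while if $Q\equiv 0$ the evaluation is always zero. A constant number of independent repetitions boosts the one-sided error as desired, and the algorithm outputs `Yes' iff some trial returns a nonzero value; correctness is immediate from the equivalence established in the first step.

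I expect the only real difficulty to be the bookkeeping that makes the composition efficient: verifying that the precomputation of Lemma~\ref{lemma:zeta-precompute} is invoked just once per random trial (so that filling in each $A_{-I}$ is polynomial-time table look-up rather than a fresh zeta transform), and checking that $\deg Q$ is polynomial in $n$ so that a field of size polynomial in $n$ suffices for Schwartz--Zippel. The algorithmic content is otherwise a direct assembly of the preceding lemmas.
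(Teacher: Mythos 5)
Your overall structure matches the paper's: apply Lemma~\ref{lem:coef} to $\det A$ over the modulator variables, precompute all $P_{-I}(v\ell)$ with the fast zeta transform, assemble each $A_{-I}$ by table look-up, and finish with Schwartz--Zippel. However, there is a genuine gap in your choice of field.

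You work over a field of characteristic two and use the unsigned version of Lemma~\ref{lem:coef}, claiming that ``the no-cancellation fact established inside [Lemma~\ref{lemma:matrix-solution}'s] proof \ldots\ guarantees that the relevant coefficients survive modulo two.'' That is not what the no-cancellation argument establishes. The $Y$-variables ensure that monomials arising from \emph{distinct perfect matchings} never combine, and that all surviving coefficients are positive integers; they do \emph{not} ensure these integers are odd. Within a single matching $M$, the contribution is $\sigma_M \prod_{e\in M} y_e P(e)$, and a fixed target monomial in the $X$-variables can be produced by several different choices of terms $(p_e)_{e\in M}$ from the factors $P(e)$ (for instance, when two edges $e_1, e_2 \in M$ both have $\emptyset$ and $\{d\}$ in their set families, the $X$-monomial $x_d$ arises twice). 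Hence the integer coefficient of a monomial divisible by $\prod_{x\in X} x$ may well be even, in which case it vanishes over $\mathbb{F}_2$ and your test $Q\not\equiv 0$ gives a false negative. The paper is explicit about this point: it remarks, just before Lemma~\ref{lemma:matrix-solution}, that it does not wish to argue for an \emph{odd} coefficient, which is exactly why Theorem~\ref{theorem:cliquemod-fast} uses the signed ($(-1)^{|I|}$, characteristic-zero) branch of Lemma~\ref{lem:coef} rather than the characteristic-two branch. To repair your argument, either switch to random evaluation over $\mathbb{Z}$ (or modulo a large prime) with the sieve $\sum_{I\subseteq D}(-1)^{|I|}\det A_{-I}$, or supply a separate argument that the relevant coefficient is odd --- the latter is precisely what the paper declines to do.

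The remaining bookkeeping in your write-up (one zeta-transform precomputation per random trial, per-subset polynomial-time determinant evaluation, degree bound polynomial in $n$, constant repetition for error reduction) is sound and in line with the paper.
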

\begin{proof}
  Let $A$ be the matrix defined above (but do not explicitly construct it yet).  By Lemma~\ref{lemma:matrix-solution}, we need to check whether $\det A$ contains a monomial divisable by $\prod_{x \in X} x$, and by Lemma~\ref{lem:coef} this is equivalent to testing whether
\[
\sum_{I \subseteq D} (-1)^{|I|} \det A_{-I} \not \equiv 0.
\]
By the Schwartz-Zippel lemma, it suffices to randomly evaluate the variables $X$ and $Y$ occurring in $A$ and evaluate this sum once; if $G$ has a proper list coloring and if the values of $X$ and $Y$ are chosen among sufficiently many values, then with high probability the result is non-zero, and if not, then the result is guaranteed to be zero.  Thus the algorithm is as follows.
\begin{enumerate}
\item Instantiate variables of $X$ and $Y$ uniformly at random from $[N]$ for some sufficiently large $N$. Note that for an error probability of $\varepsilon > 0$, it suffices to use $N=\Omega(n^2(1/\varepsilon))$.
\item Use Lemma~\ref{lemma:zeta-precompute} to fill in a table with the value of $P_{-I}(v\ell)$ for all $I$ and $v\ell$ in time $\bigoh^*(2^k).$
\item Compute 
  \[
  \sum_{I \subseteq D} (-1)^{|I|} \det A_{-I},
  \]
  constructing $A_{-I}$ from the values $P_{-I}(v\ell)$ in polynomial time in each step. 
\item Answer YES if the result is non-zero, NO otherwise.
\end{enumerate}
Clearly this runs in total time and space $\bigoh^*(2^k)$ and the correctness follows from the arguments above.
\end{proof}

\subsection{Refuting Polynomial Kernel}

In this section, we prove that {\LCCM} does not admit a polynomial kernel.
We prove this result by a polynomial parameter transformation from {\HS} where the parameter is the number of sets.
Notice that {\HS} parameterized by number of sets is equivalent to {\sc Set Cover} parameterized by the universe size.
Let us recall the formal definition of the {\HS} problem.

\begin{problem}[$m$]{\HS}
  \Input & A universe $U$ of $n$ elements, a family $\cF \subseteq 2^U$ of $m$ subsets of $U$, and an integer $k$.\\
  \Prob  & Is there $X \subseteq U$ with at most $k$ elements such that for every $F \in \cF$, it holds that $F \cap X \neq \emptyset$?
\end{problem}

Let $(U, \cF, k)$ be an instance of {\HS} problem where $U = [n]$, and $\cF = \{F_1,\ldots,F_m\}$.
Now, we are ready to describe the construction.

{\bf Construction:}
For every $i \in [m]$, we create a vertex $u_i$ and assign $L(u_i) = F_i$.
Let $D=\{u_1,\ldots,u_m\}$.
In addition, we create a clique $C$ with  $n - k$ vertices $\{v_1,\ldots,v_{n-k}\}$.
Moreover, for every $j \in [n-k]$, we set $L(v_j) = U$ and
for all $i \in [n-k]$ and $j \in [m]$, let $(u_i,v_j)$ be an edge.
This completes the construction, which takes polynomial time.
We denote the obtained graph by $G.$
Now, we prove the following lemma.

\begin{lemma}
\label{lemma:HS-eqv-List-Coloring}
$(U, \cF, k)$ is a yes-instance if and only if $G$ admits a proper list coloring.
\end{lemma}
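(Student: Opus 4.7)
My plan is to prove the equivalence in the two standard directions, using $X = U \setminus \lambda(C)$ as the bridge between a proper list coloring $\lambda$ and a hitting set.

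For the forward direction, assume $X \subseteq U$ with $|X| \leq k$ is a hitting set of $\cF$. The idea is to reserve the $k$ colors in $X$ for the modulator vertices and use the remaining $n-|X| \geq n-k$ colors on the clique. Formally, since $L(v_j) = U$ for every $j \in [n-k]$ and $|U \setminus X| \geq n-k$, I can pick $n-k$ distinct colors from $U \setminus X$ and assign them injectively to the vertices $v_1,\dotsc,v_{n-k}$, which is a valid proper coloring of the clique $C$. For every $u_i \in D$, by definition of a hitting set there exists some color $c_i \in F_i \cap X = L(u_i) \cap X$, and I set $\lambda(u_i) = c_i$. Since $c_i \in X$, it does not appear on $C$, so no edge between $u_i$ and $C$ is monochromatic; and since $D$ is an independent set in $G$ (the only edges incident to $D$ go to $C$), there are no constraints between different $u_i$'s. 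Hence $\lambda$ is a proper list coloring.

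For the reverse direction, assume $\lambda$ is a proper list coloring of $G$. Because $C$ is a clique on $n-k$ vertices, $\lambda$ assigns $n-k$ pairwise distinct colors to $C$; let $Y := \lambda(V(C))$, so $Y \subseteq U$ and $|Y| = n-k$. Set $X := U \setminus Y$, which has $|X| = k$. For every $u_i \in D$, the vertex $u_i$ is adjacent to all of $C$, so $\lambda(u_i) \notin Y$, i.e.\ $\lambda(u_i) \in X$. On the other hand, $\lambda(u_i) \in L(u_i) = F_i$. Therefore $F_i \cap X \neq \emptyset$ for every $i \in [m]$, proving $X$ is a hitting set of size at most $k$.

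I do not anticipate a substantive obstacle: both directions are direct and the only points that need a clean argument are (i) that the clique really does need exactly $n-k$ distinct colors drawn from $U$ (which follows since $L(v_j) = U$), and (ii) that the independent set $D$ imposes no internal coloring constraints, so the only restriction on $\lambda(u_i)$ comes from its adjacency to $C$. Writing the proof will mostly be a matter of laying out these two observations carefully and matching the roles of $X$ and $U \setminus Y$.
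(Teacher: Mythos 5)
Your proposal is correct and follows essentially the same two-direction argument as the paper. The only cosmetic difference is in the backward direction: you take $X = U \setminus \lambda(V(C))$ (exactly $k$ colors) while the paper takes $X$ to be the set of colors actually used on $D$ (at most $k$ colors, a subset of yours); both are valid hitting sets of size at most $k$, and the rest of the reasoning is the same.
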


\begin{proof}
Towards showing the forward direction, let $(U,\cF, k)$ be an yes-instance.
Then, there is a set $X$ of at most $k$ elements from $U$ such that
for every $F_i \in \cF$, $X \cap F_i \neq \emptyset$.

Using the elements present in $X$, we can color $D$ as follows.
We pick an element arbitrarily from every $F_i \cap X$, and color the
vertex $u_i$ using that color.
After that, we provide different colors to different vertices in $C$
that are different from the colors used in $D$ as well.
Hence, we can color $G$ by $n$ colors.

%Now, we prove the backward direction 
Towards showing the backwards direction, suppose that $G$ has a proper
list coloring.
Note that all vertices of $C$ have to get different colors.
Hence, the vertices of $D$ must be colorable using only $k$ colors. %from $[n]$.
Suppose that $X$ is the set of $k$ colors used to color the vertices of $D$.
Note that the colors respect the list for every vertex in $D$ where the list represents the sets in the family.
Hence, $X$ is a hitting set of size $k$.
%This completes the proof.
\end{proof}

It is known due to Dom et al.~\cite{DomLS14} that {\HS} parameterized by $m$
does not have a polynomial kernel unless NP $\subseteq$ coNP/poly.
Using this fact and Lemma~\ref{lemma:HS-eqv-List-Coloring}, we have the following theorem.

\begin{theorem}
{\LCCM} parameterized by $k$ does not admit a polynomial kernel unless NP $\subseteq$ coNP/poly.
\end{theorem}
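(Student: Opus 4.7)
The plan is to treat the construction preceding the theorem as a polynomial parameter transformation from \HS{} parameterized by the number of sets $m$ to \LCCM{} parameterized by modulator size $k$, and then invoke the kernelization lower bound of Dom, Lokshtanov and Saurabh for \HS{}.

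First I would verify that the graph $G$ produced by the construction has clique modulator $D=\{u_1,\dots,u_m\}$: the set $D$ is introduced as an independent set (no edges among the $u_i$ are added), while $G-D = C$ is by construction a clique on $n-k$ vertices. Consequently the parameter of the output instance of \LCCM{} is exactly $|D|=m$, i.e.\ the parameter of the source \HS{} instance, giving a trivial (in fact, identity) bound between the two parameters. The construction runs in polynomial time, and Lemma~\ref{lemma:HS-eqv-List-Coloring} already provides the correctness equivalence.

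The final step is the standard transitivity of kernelization lower bounds under polynomial parameter transformations (see, e.g., \cite{cygan2015,fomin2019}): if \LCCM{} parameterized by $k$ admitted a polynomial kernel, then composing the above polynomial-time reduction with that kernel would yield a polynomial kernel for \HS{} parameterized by $m$, contradicting the result of Dom, Lokshtanov and Saurabh~\cite{DomLS14} unless NP $\subseteq$ coNP/poly.

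I do not anticipate any real obstacle, since the proof is essentially an assembly of pieces already in place. The only points worth checking carefully are that the target parameter is polynomially (here, linearly) bounded in the source parameter, and that $D$ is genuinely a clique modulator; both follow immediately from the observation that $D$ is an independent set of size exactly $m$, while the hitting-set budget $k$ from the \HS{} input is absorbed into the size of the clique $C$ and plays no role in the clique-modulator parameter.
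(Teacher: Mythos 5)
Your proposal is correct and matches the paper's own argument: the paper uses exactly the same polynomial parameter transformation from \HS{} parameterized by the number of sets, combined with Lemma~\ref{lemma:HS-eqv-List-Coloring} and the lower bound of Dom, Lokshtanov and Saurabh~\cite{DomLS14}. Your additional remark verifying that $D$ is indeed a clique modulator of size $m$ is a useful explicit check that the paper leaves implicit.
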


Note that the reduction also shows that if \LCCM{} could be solved in
time $\bigoh(2^{\epsilon k}n^{\bigoh(1)})$ for some $\epsilon<1$, then
\HS{} could be solved in time
$\bigoh(2^{\epsilon |\cF|}|U|^{\bigoh(1)})$, which in turn would imply
that any instance $I$ with universe $U$ and set family $\cF$
of the well-known \textsc{Set Cover} problem could be solved in time
$\bigoh(2^{\epsilon |U|}|\cF|^{\bigoh(1)})$. The existence of such an algorithm
is open, and it has been conjectured that no such algorithm is
possible under SETH (the strong exponential-time hypothesis); see
Cygan et al.~\cite{CyganDLMNOPSW16}. Thus, up to the assumption of
this conjecture and SETH, the algorithm for \LCCM{} given in
Theorem~\ref{theorem:cliquemod-fast} is best possible w.r.t. its
dependency on $k$.

\section{Polynomial kernel for {\PCECM}}\label{sec:precol}

In the following let $(G,D,k,\lambda_P, X, Q)$ be an instance of
\PCECM{}, let $C=G-D$, let $D_P$ be the set of all precolored vertices
in $D$, and let $D'=D \setminus D_P$.

\begin{reduction rule}
  \label{red-rule:large-non-neighbor}
  Remove any vertex $v \in D'$ that has less than $|Q|$
  neighbors in $G$. 
\end{reduction rule}

The proof of the following lemma is obvious and thus omitted.

\begin{lemma}
  Reduction Rule~\ref{red-rule:large-non-neighbor} is safe and can be
  implemented in polynomial time.
\end{lemma}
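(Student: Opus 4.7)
The plan is to verify safety in both directions separately; the forward direction is immediate, since restricting any proper coloring of $G$ extending $\lambda_P$ to $V(G)\setminus\{v\}$ remains a proper coloring extending $\lambda_P$ (note that $v \in D'$ means $v \notin X$, so there is no pre-coloring constraint at $v$ to worry about). The substance of the proof lies in the backward direction, where one must check that any proper coloring of the reduced instance extends to $v$.

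For that step, I would take any proper coloring $\lambda'$ of $G - v$ using colors from $Q$ and extending $\lambda_P$, and invoke the precondition of the rule: since $|N_G(v)| < |Q|$, the set $\{\lambda'(u) : u \in N_G(v)\}$ has size at most $|Q|-1$, so there is at least one color $c \in Q \setminus \lambda'(N_G(v))$. Setting $\lambda(v) = c$ and $\lambda = \lambda'$ elsewhere produces a proper coloring of $G$ that uses only colors from $Q$ and agrees with $\lambda_P$ on $X$, establishing safety. The underlying intuition is that a low-degree non-precolored vertex can always be colored last from $Q$ no matter what its neighbors do.

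The polynomial-time implementation is routine: compute $|N_G(v)|$ for every $v \in D'$ in time $\bigoh(|V(G)| + |E(G)|)$ and delete those with fewer than $|Q|$ neighbors; a single pass suffices since deletions do not create new eligible vertices outside $D'$, and the remaining verification fits inside $|V(G)|+k$. There is no real obstacle here -- the only mild subtlety is remembering that $v \in D'$ guarantees $v \notin X$, which is what allows the extension step to freely assign $\lambda(v)$ without contradicting $\lambda_P$.
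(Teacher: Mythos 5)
The paper itself omits the proof, calling it ``obvious,'' so there is nothing to compare against; your argument is the expected one and is correct in substance: both directions of safety are handled properly (noting $v\notin X$ is the right and only subtlety), and polynomial time is clear.

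One small inaccuracy in your complexity discussion: you claim a single pass suffices ``since deletions do not create new eligible vertices outside $D'$.'' But the concern was never about vertices outside $D'$ (the rule applies only to $D'$); the concern is that deleting $v$ reduces the degree of its neighbors, so a vertex $u\in D'$ that had exactly $|Q|$ neighbors before the deletion can become newly eligible afterward, inside $D'$. So exhaustive application may require several rounds. This does not affect the conclusion --- $|D'|\le k$, so there are at most $k$ rounds, each polynomial --- but the justification as written is not quite right and should simply appeal to exhaustive application terminating in at most $|D'|$ steps.
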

%\begin{proof}
%  The forward direction $(\Rightarrow)$ of the proof is obvious.
%  If $G$ has a proper coloring with colors from $Q$ that extends $\lambda_P$, then so does $G - \{v\}$.
%  Now, we consider the backward direction $(\Leftarrow)$ of the proof.
%  Suppose that $\lambda_P$ can be extended to a proper coloring of $G - \{u\}$ using only colors from $Q$.
%  We know that $u$ has at most $|Q| - 1$ neighbors.
%  So, the neighbors of $u$ can use at most $|Q| - 1$ colors.
%  So, we have a color from $Q$ available to assign $u$ without violating the fact that it is a proper coloring.
%  This completes the proof. 
%\end{proof}
Note that if Reduction Rule~\ref{red-rule:large-non-neighbor} can no
longer be applied, then every vertex in $D'$ has at least
$|Q|$ neighbors, which because of $|Q|\geq |C|$ implies that every such
vertex has at most $|D|\leq k$ non-neighbors in $G$ and hence also in
$C$. Let $C_N$ be the set of all vertices in $C$ that are not adjacent
to all vertices in $D'$ and let $C'=C-C_N$. Note that
$|C_N|\leq |D||D|\leq k^2$.

We show next how to reduce the size of $C_N$ to $k$. Note that this step is
optional if our aim is solely to obtain a polynomial kernel, however, it
allows us to reduce the number of vertices in the resulting kernel
from $\bigoh(k^2)$ to $\bigoh(k)$. Let $J$ be the bipartite graph
with partition $(C_N,D)$ having an edge between $c \in C_N$ and $d \in
D$ if $\{c,d\} \notin E(G)$.
\begin{reduction rule}\label{rr:precol-crown}
  If $A \subseteq C_N$ is an inclusion-wise minimal set satisfying $|A|>|N_J(A)|$, then
  remove the vertices in $D' \cap N_J(A)$ from $G$. 
\end{reduction rule}
Note that after the application of  Reduction
Rule~\ref{rr:precol-crown}, the vertices in $A$ are implicitly removed
from $C_N$ and added to $C'$ since all their non-neighbors in $D'$
(i.e. the vertices in $D'\cap N_J(A)$) are removed from the graph.
\begin{lemma} \label{lemma:precol-crown}
  Reduction Rule~\ref{rr:precol-crown} is safe and can be
  implemented in polynomial time.
\end{lemma}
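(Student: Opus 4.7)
The plan is to verify Reduction Rule~\ref{rr:precol-crown} via a Hall-style matching argument on $J$. Inclusion-wise minimality of $A$ forces $|A| = |N_J(A)| + 1$: taking any $v \in A$, minimality yields $|A \setminus \{v\}| \leq |N_J(A \setminus \{v\})| \leq |N_J(A)|$, which combined with $|A| > |N_J(A)|$ gives equality. Moreover $J[A \cup N_J(A)]$ admits a matching $M$ saturating $N_J(A)$: otherwise Hall's theorem (Proposition~\ref{pro:hallset}) applied to the $N_J(A)$ side yields $S \subseteq N_J(A)$ with $|N_J(S) \cap A| < |S|$, and then $A' = A \setminus N_J(S)$ satisfies $A' \subsetneq A$, $N_J(A') \subseteq N_J(A) \setminus S$, and $|A'| > |N_J(A')|$, contradicting minimality. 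In particular $M$ saturates $Z := D' \cap N_J(A)$, associating each $d \in Z$ with a unique $a_d \in A \setminus N_G(d)$.

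For safety the forward direction is immediate by restriction. For the backward direction, given a proper extension $\lambda$ of $\lambda_P$ to $G - Z$, the plan is to extend by setting $\lambda(d) := \lambda(a_d)$ for each $d \in Z$. A case analysis on a neighbor $u$ of $d$ handles almost every case: if $u \in C$ then the clique constraint on $C$ rules out $\lambda(u) = \lambda(a_d)$; if $u \in Z$ then $a_u \neq a_d$ are distinct vertices of the clique $A$ and so have distinct $\lambda$-values; and if $u \in D' \setminus N_J(A)$ then $u$ is adjacent to every vertex of $A$ including $a_d$, so $\lambda(u) \neq \lambda(a_d)$.

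The main obstacle is the remaining case $u \in D_P$ with $\lambda_P(u) = \lambda(a_d)$, $u$ adjacent to $d$, and $u$ non-adjacent to $a_d$; the naive matching $M$ need not avoid such precolored conflicts. To handle it I would choose $a_d$ not via a fixed $M$ but via an auxiliary bipartite graph $H^*$ on $(Z, A)$, where $d \sim a$ iff $a \in A \setminus N_G(d)$ and $\lambda(a) \notin \lambda_P(D_P \cap N_G(d))$: a matching of $H^*$ saturating $Z$ yields a valid extension by the case analysis above. Hall's condition for $H^*$ is the technical heart of the proof: for any $S \subseteq Z$, reapplying the minimality argument (to $A \setminus \bigcup_{d \in S}(A \setminus N_G(d))$) gives the surplus $\bigl|\bigcup_{d \in S}(A \setminus N_G(d))\bigr| \geq |S| + 1$; meanwhile, since $\lambda$ is injective on the clique $A$, each precolored blocker $v \in D_P \cap N_J(A)$ eliminates at most one candidate from $A$, namely the unique $a$ with $\lambda(a) = \lambda_P(v)$ (and such a $v$ must be non-adjacent to that $a$, hence lies in $D_P \cap N_J(A)$). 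A careful count using this near-injective assignment of blockers to elements of $A$ bounds the deficit against the surplus guaranteed by minimality.

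For the polynomial-time implementation, a minimal deficient $A$ is found by computing a maximum matching in $J$ to detect deficiency (via the existence of an unsaturated vertex on the $C_N$ side of any max matching) and greedily shrinking a witnessed Hall set while preserving the deficit. The matching in $H^*$ (or $M$ in $J[A \cup N_J(A)]$) is then a standard bipartite matching computation, so the overall procedure runs in polynomial time.
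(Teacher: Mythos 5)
Your first paragraph (that minimality forces $|A| = |N_J(A)|+1$ and that $J[A\cup N_J(A)]$ has a matching saturating $N_J(A)$) is correct and matches the paper. The paper then stops there: it colors $D'\cap N_J(A)$ directly according to such a matching $M$, justifying correctness by asserting that every color in $\lambda(A)$ appears exactly once in $G\setminus(D'\cap N_J(A))$. You are right to be suspicious of that assertion as stated -- nothing in the paper's argument visibly rules out a precolored vertex $w\in D_P\cap N_J(A)$ that is non-adjacent to some $a\in A$ and has $\lambda_P(w)=\lambda(a)$ -- so spotting this subtlety is genuine credit.

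However, your proposed repair does not close the gap. The Hall condition you claim for $H^*$ can fail even on safe instances, and the ``careful count'' you allude to cannot be made to work with only the $+1$ surplus provided by minimality. Concretely, take $A=\{a_1,a_2,a_3,a_4\}$ and $N_J(A)=\{d,w_1,w_2\}$ with $d\in D'$, $w_1,w_2\in D_P$, and $J$-edges $d\sim a_1,a_2$; $w_1\sim a_1,a_3$; $w_2\sim a_2,a_4$. One checks that $A$ is inclusion-minimal with $|A|>|N_J(A)|$. Now suppose $\lambda$ is a coloring of $G\setminus\{d\}$ with $\lambda_P(w_1)=\lambda(a_1)$, $\lambda_P(w_2)=\lambda(a_2)$, and $w_1,w_2\in N_G(d)$ (all of which is consistent with $\lambda$ being proper, since $w_1$ is non-adjacent to $a_1$ and $w_2$ to $a_2$). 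Then for $S=\{d\}$ we have $N_{H^*}(S)=\emptyset$: both $J$-candidates $a_1,a_2$ of $d$ are eliminated by a precolored blocker, while $a_3,a_4$ are not $J$-adjacent to $d$ at all. So $|N_{H^*}(S)|=0<1=|S|$ and no matching of $H^*$ saturating $Z$ exists. Your surplus bound $|N_J(S)\cap A|\geq|S|+1$ gives slack exactly $1$, but here two distinct precolored vertices eliminate two candidates, so ``each blocker eliminates at most one'' is not enough. Note the reduction rule is still safe in this example -- one can recolor $C$ so that, say, $\lambda(a_3)=\lambda_P(w_1)$ and then $d$ can receive $\lambda(a_1)$ -- but that requires modifying $\lambda$ on $C$, not just choosing a cleverer matching against the fixed $\lambda$. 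A complete proof along your lines would therefore have to permute the coloring of the clique before (or instead of) constraining the matching; as written, the Hall-condition step is a genuine gap.

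Your polynomial-time implementation discussion is fine and in line with the paper.
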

\begin{proof}
  It is clear that the rule can be implemented in polynomial-time.
  Towards showing the safeness of the rule, it suffices to show that
  $G$ has a coloring extending $\lambda_P$ using only colors from $Q$
  if and only if so does $G \setminus (D'\cap N_J(A))$. Since $G\setminus
  (D' \cap N_J(A))$ is
  a subgraph of $G$, the forward direction of this statement is
  trivial. So assume that $G\setminus (D'\cap N_J(A))$ has a coloring $\lambda$
  extending $\lambda_P$ using only colors from $Q$. Because the set
  $A$ is inclusion-minimal, we obtain from
  Proposition~\ref{pro:hallset}, that there is a (maximum) matching,
  say $M$, 
  between $N_J(A)$ and $A$ in $J$ that saturates $N_J(A)$. Moreover,
  it follows from the definition of $J$ that every vertex in $A$ is
  adjacent to every vertex in $D\setminus N_J(A)$ in the graph $G$.
  Hence, we obtain that every color in $\lambda(A)$ appears
  exactly once. Hence, we can extend $\lambda$ into a coloring
  $\lambda'$ for $G$ by coloring the vertices in $D'\cap N_J(A)$ according to
  the matching $M$. More formally, let $\lambda_{D'\cap N_J(A)}$ be the
  coloring for the vertices in $D' \cap N_J(A)$ by setting
  $\lambda_{D'\cap N_J(A)}(v)=\lambda(u)$ for every $v \in D'\cap N_J(A)$, where $\{v,u\} \in M$.
  Then, we obtain $\lambda'$ by setting: $\lambda'(v)=\lambda(v)$ for
  every $v \in V(G)\setminus (D'\cap N_J(A))$ and
  $\lambda'(v)=\lambda_{D' \cap N_J(A)}(v)$ for every
  vertex $v \in D' \cap N_J(A)$.
\end{proof}
Note that because of Proposition~\ref{pro:hallset},
we obtain that there is a set $A \subseteq C_N$ with $|A|>|N_J(A)|$ as
long as $|C_N|>|D|$. Moreover, since $N_J(A)\cap D'\neq \emptyset$ for
every such set $A$ (due to the definition of $C_N$), we obtain that
Reduction
Rule~\ref{rr:precol-crown} is applicable as long as
$|C_N|>|D|$. Hence after an exhaustive application of Reduction
Rule~\ref{rr:precol-crown}, we obtain that $|C_N|\leq |D'|\leq k$.

We now introduce our final two reduction rules, which allow us to reduce
the the size of $C'$.
\begin{reduction rule}\label{red-rule:precolCP}
  Let $v \in V(C')$ be a precolored vertex with color $\lambda_P(v)$. Then remove
  $\lambda_P^{-1}(\lambda_P(v))$ from $G$ and $\lambda_P(v)$ from $Q$.
\end{reduction rule}

\begin{lemma}
  Reduction Rule~\ref{red-rule:precolCP} is safe and can be
  implemented in polynomial time.
\end{lemma}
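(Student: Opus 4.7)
The plan is straightforward and rests on the observation that a vertex $v \in V(C')$ has very few non-neighbors. Since $C$ is a clique, $v$ is adjacent to every other vertex of $C$, and by definition of $C'$ (namely, the vertices of $C$ adjacent to all of $D'$), every non-neighbor of $v$ in $V(G)\setminus\{v\}$ lies in $D_P$. I will use this to argue that in any proper extension $\lambda$ of $\lambda_P$ using colors from $Q$, the color $\lambda_P(v)$ can only appear on vertices of $\lambda_P^{-1}(\lambda_P(v))$: any other vertex is either adjacent to $v$ (so it cannot receive $\lambda_P(v)$) or lies in $D_P$ and is precolored with a color distinct from $\lambda_P(v)$.

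Given this key claim, both directions of safety will follow immediately. For the forward direction, if $\lambda$ extends $\lambda_P$ properly using colors from $Q$, then $\lambda$ restricted to $V(G)\setminus \lambda_P^{-1}(\lambda_P(v))$ is a proper extension of the restricted pre-coloring and, by the key claim, uses only colors from $Q\setminus\{\lambda_P(v)\}$. For the backward direction, given a solution $\lambda'$ of the reduced instance, I will recover a coloring $\lambda$ of $G$ by setting $\lambda(u)=\lambda'(u)$ outside $\lambda_P^{-1}(\lambda_P(v))$ and $\lambda(u)=\lambda_P(v)$ on $\lambda_P^{-1}(\lambda_P(v))$. Properness follows because $\lambda_P^{-1}(\lambda_P(v))$ is an independent set (otherwise $\lambda_P$ is itself an invalid pre-coloring and the original instance is NO, which one can detect by a trivial polynomial-time pre-processing check), and because $\lambda'$ does not use the color $\lambda_P(v)$ at all, since it was removed from the color set.

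Polynomial-time implementation is immediate: one identifies $\lambda_P^{-1}(\lambda_P(v))$ by scanning $X$, then deletes these vertices from $G$ and removes $\lambda_P(v)$ from $Q$. I do not expect any genuine obstacle here; the only subtlety is the side condition that $\lambda_P$ is internally consistent on $G[X]$, which I will dispose of with the above one-line sanity check.
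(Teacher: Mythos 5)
Your proof is correct and takes essentially the same approach as the paper's: the crux in both is the observation that since $C$ is a clique and $v\in C'$ is adjacent to all of $D'$, every non-neighbor of $v$ lies in $D_P$ and is therefore already precolored, so the color class of $\lambda_P(v)$ is completely determined in any proper extension. You spell out the two directions of safety more explicitly than the paper (which dispatches both with one sentence) and you correctly flag the implicit sanity check that $\lambda_P$ must be a proper partial coloring to begin with, but there is no difference in the underlying idea.
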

\begin{proof}
  Because $v \in V(C')$, it holds that only vertices in $D_P$ can have
  color $\lambda_P(v)$, but these are already precolored. Hence in any
  coloring for $G$ that extends $\lambda_P$, the vertices in
  $\lambda_P^{-1}(\lambda_P(v))$ are the only vertices that obtain color $\lambda_P(v)$,
  which implies the safeness of the rule.
\end{proof}
Because of Reduction Rule~\ref{red-rule:precolCP}, we can from now on
assume that no vertex in $C'$ is precolored.

Note that the only part of $G$, whose size is not yet bounded by a
polynomial in the parameter $k$ is $C'$. To reduce the size of $C'$,
we need will make use of Proposition~\ref{pro:matching}.

Let $P=\lambda_P(D_P)$ and $H$ be the bipartite graph with bipartition $(C',P)$ containing an
edge between $c' \in C'$ and $p \in P$ if and only if $c'$ is not
adjacent to a vertex pre-colored by $p$ in $G$.
\begin{reduction rule}
  \label{red-rule:matching}
  Let $M$ be a maximum matching in $H$ and let $C_M$ be the endpoints
  of $M$ in $C'$. Then remove all vertices in $C_{\overline{M}}:=C'\setminus C_M$ from
  $G$ and remove an arbitrary set of $|C_{\overline{M}}|$ colors from
  $Q\setminus \lambda_P(X)$. (Recall that $\lambda_P:\ X\rightarrow Q.$)
\end{reduction rule}
In the following let $C_M$ and $C_{\overline{M}}$ be as defined in the above
reduction rule for an arbitrary maximum matching $M$ of $H$. To show
that the reduction rule is safe, we need the following auxiliary
lemma, which shows that if a coloring for $G$ reuses colors from $P$
in $C'$, then those colors can be reused solely on the vertices in $C_M$.
\begin{lemma}
  \label{lem:matching}
  If there is a coloring $\lambda$ for $G$ extending $\lambda_P$ using only
  colors in $Q$, then there is a coloring
  $\lambda'$ for $G$ extending $\lambda_P$ using only colors in $Q$
  such that $\lambda'(C_{\overline{M}})\cap P=\emptyset$.
\end{lemma}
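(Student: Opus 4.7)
The plan is to start from an arbitrary proper extension $\lambda$ and show that any ``reused'' precolor appearing on $C_{\overline{M}}$ can be pushed inside $C_M$ via Proposition~\ref{pro:matching}, without disturbing the rest of $\lambda$. First I would define $C'_\lambda=\SB c\in C' \SM \lambda(c)\in P\SE$ and $P_\lambda=\lambda(C'_\lambda)$. Since $C'$ lies in the clique $C$, the map $\lambda\restriction_{C'_\lambda}$ is a bijection onto $P_\lambda$, so in particular $|C'_\lambda|=|P_\lambda|$. I would then observe that $M_\lambda=\SB\{c,\lambda(c)\} \SM c\in C'_\lambda\SE$ is a matching in $H$ that saturates $P_\lambda$: each $\{c,\lambda(c)\}$ is indeed an edge of $H$, because any vertex precolored $\lambda(c)$ and adjacent to $c$ would clash with $\lambda$, which is proper.

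Next, I would apply Proposition~\ref{pro:matching} to $H$ with $X=C'$, $Y=P$, and $Y'=P_\lambda$: since $H$ contains a matching covering $P_\lambda$ (namely $M_\lambda$), the induced subgraph $H[C_M\cup P]$ contains one too, call it $M''$. Let $C_*\subseteq C_M$ be the $C_M$-endpoints of $M''$; then $|C_*|=|P_\lambda|$ and $M''$ supplies, for every $p\in P_\lambda$, a specific vertex $c(p)\in C_M$ with $\{c(p),p\}\in E(H)$.

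I now define $\lambda'$ by keeping $\lambda$ on $V(G)\setminus C'$, setting $\lambda'(c(p))=p$ for each $p\in P_\lambda$, and on $C'\setminus C_*$ using any bijection to the remaining colors $\lambda(C')\setminus P_\lambda$. Since $P_\lambda=\lambda(C')\cap P$, the set $\lambda(C')\setminus P_\lambda$ is disjoint from $P$, so every vertex of $C'\setminus C_*$, and a fortiori of $C_{\overline{M}}\subseteq C'\setminus C_M\subseteq C'\setminus C_*$, receives a color outside $P$; this is exactly the required conclusion.

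The main obstacle is the correctness check that $\lambda'$ remains a proper extension of $\lambda_P$, and I would carry it out by considering the possible edges. Properness of $\lambda'$ inside the clique $C$ holds because $\lambda'(C')=\lambda(C')$ as a set and $\lambda$ was already proper on $C$. For an edge $\{c,d\}$ with $c\in C'$ and $d\in D'$, we have $\lambda'(c)\in\lambda(C')$ and $\lambda(d)\notin\lambda(C')$ by properness of $\lambda$, so they differ. The delicate case is edges $\{c,d\}$ with $c\in C'$ and $d\in D_P$: if $\lambda'(c)\in Q\setminus P$ it clearly differs from $\lambda_P(d)\in P$; if $\lambda'(c)=p\in P_\lambda$, then $c=c(p)$ and $\{c(p),p\}\in E(H)$, which by the definition of $H$ means that $c(p)$ is not adjacent to any vertex precolored $p$, so $\lambda_P(d)\neq p$. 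Hence $\lambda'$ is a proper coloring extending $\lambda_P$ with $\lambda'(C_{\overline{M}})\cap P=\emptyset$.
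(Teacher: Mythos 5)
Your proof is correct and takes essentially the same approach as the paper's: both identify the set of $C'$-vertices that reuse colors from $P$, invoke Proposition~\ref{pro:matching} to relocate a saturating matching for those colors into $H[C_M\cup P]$, and then permute the non-$P$ colors on $C'$ to fill the gaps, with the properness check resting on the facts that $\lambda$ is injective on $C$, that $C'$-vertices are adjacent to all of $D'$, and that an $H$-edge $\{c(p),p\}$ rules out any neighbor of $c(p)$ being precolored $p$. The only cosmetic difference is that you permute the non-$P$ colors across all of $C'\setminus C_*$ whereas the paper permutes only on the symmetric difference of the two matchings' $C'$-endpoints; both are valid.
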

\begin{proof}
  Let $C_P$ be the set of all vertices $v$ in $C'$ with $\lambda(v)
  \in P$. If $C_P\cap C_{\overline{M}}=\emptyset$, then setting $\lambda'$ equal to
  $\lambda$ satisfies the claim of the lemma. Hence assume that $C_P\cap C_{\overline{M}}\neq
  \emptyset$. Let $N$ be the matching in $H$ containing the edges
  $\{v,\lambda(v)\}$ for every $v \in C_P$; note that $N$ is indeed a
  matching in $H$, because $C_P$ is a clique in $G$. Because of
  Proposition~\ref{pro:matching}, there is a matching $N'$ in
  $H[C_M\cup P]$ such that $N'$ has exactly the same endpoints in $P$
  as $N$. Let $C_{M}[N']$ be the endpoints of $N'$ in $C_M$ and let
  $\lambda_A$ be the coloring of the vertices in $C_{M}[N']$
  corresponding to the matching $N'$, i.e., a vertex $v$ in $C_{M}[N']$
  obtains the unique color $p \in P$ such that $\{v,p\} \in N'$.
  Finally, let $\alpha$ be an arbitrary bijection between the vertices in
  $(V(N)\cap C')\setminus C_M[N']$ and the vertices in $C_M[N']\setminus (V(N)\cap C')$, which
  exists because $|N|=|N'|$.
  We now obtain $\lambda'$ from $\lambda$ by setting
  $\lambda'(v)=\lambda_A(v)$ for every $v \in C_{M}[N']$,
  $\lambda'(v)=\lambda(\alpha(v))$ for every vertex $v \in
  (V(N)\cap C')\setminus C_M[N']$, and $\lambda'(v)=\lambda(v)$ for every
  other vertex. To see that $\lambda'$ is a proper coloring note that
  $\lambda'(C')=\lambda(C')$. Moreover, all the colors in
  $\lambda(C')\setminus P$ are ``universal colors'' in the sense that
  exactly one vertex of $G$ obtains the color and hence those
  colors can be freely moved around in $C'$. Finally, the matching
  $N'$ in $H$ ensures that the vertices in $C_M[N']$ can be colored
  using the colors from $P$.
\end{proof}

\begin{lemma}
  \label{lemma:matching-rule-is-safe}
  Reduction Rule~\ref{red-rule:matching} is safe and can be implemented in polynomial time.
\end{lemma}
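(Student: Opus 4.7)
The plan is to establish the polynomial running time by noting that the only nontrivial step is computing a maximum bipartite matching in $H$, which is a classical polynomial-time task; all remaining operations are elementary set manipulations. For safeness I would prove the two implications separately, treating the choice of the removed color set $K\subseteq Q\setminus\lambda_P(X)$ with $|K|=|C_{\overline{M}}|$ as arbitrary.

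For the backward direction, given a proper coloring $\mu$ of $G\setminus C_{\overline{M}}$ that extends $\lambda_P$ and uses only colors from $Q\setminus K$, I would extend $\mu$ to $G$ by assigning the colors of $K$ to the vertices of $C_{\overline{M}}$ via an arbitrary bijection, which is well-defined since $|K|=|C_{\overline{M}}|$. Since $C_{\overline{M}}$ lies in the clique $C$ and every color in $K$ is disjoint from $\lambda_P(X)$ (by the choice of $K$) as well as from the colors $\mu$ uses (which lie in $Q\setminus K$), no monochromatic edge is introduced.

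The forward direction is more delicate. Starting from a proper coloring $\lambda$ of $G$ extending $\lambda_P$ with colors in $Q$, I would first invoke Lemma~\ref{lem:matching} to assume that $\lambda(C_{\overline{M}})\cap P=\emptyset$. The crucial property, already exploited inside the proof of Lemma~\ref{lem:matching}, is that every color in $\lambda(C')\setminus P$ is used on exactly one vertex of $G$; in particular the set $U:=\lambda(C_{\overline{M}})$ is disjoint both from $\lambda(G\setminus C_{\overline{M}})$ and from $\lambda_P(X)$ (a precolored color appearing on $C$ would violate the clique property, and a precolored color appearing on $D$ would lie in $P$). Setting $B:=\lambda(G\setminus C_{\overline{M}})\setminus\lambda_P(X)$, the disjointness $B\sqcup U\subseteq Q\setminus\lambda_P(X)$ together with $|U|=|K|$ yields the cardinality inequality $|B|\leq|(Q\setminus\lambda_P(X))\setminus K|$.

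I would finish by picking any permutation $\rho$ of $Q$ that fixes $\lambda_P(X)$ pointwise and maps $B$ into $(Q\setminus\lambda_P(X))\setminus K$; the bound above guarantees such a $\rho$ exists. Then $\rho\circ\lambda$ is still a proper coloring of $G$ extending $\lambda_P$, and its restriction to $G\setminus C_{\overline{M}}$ uses only colors in $Q\setminus K$, which is what we need. The main obstacle is precisely this final step: the reduction rule allows $K$ to be any admissible $|C_{\overline{M}}|$-subset of $Q\setminus\lambda_P(X)$, so the proof must survive an adversarial choice of $K$; the permutation argument is what makes this work, by relabelling any color of $\lambda(G\setminus C_{\overline{M}})$ that happens to lie in $K$ to one of the many colors still available outside $K$.
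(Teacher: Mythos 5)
Your proof is correct and uses essentially the same recoloring idea as the paper: invoke Lemma~\ref{lem:matching}, then exploit the fact that the colors $\lambda$ assigns to $C_{\overline{M}}$ form a set $U$ of singleton colors disjoint from $\lambda(G\setminus C_{\overline{M}})\cup\lambda_P(X)$, which leaves enough room to relabel away from $K$. The paper realizes this as an explicit swap between a vertex set $A\subseteq C_{\overline{M}}$ (colored from $Q'$) and a vertex set $B\subseteq G-C_{\overline{M}}$ (colored from $Q_{\overline M}$), while you apply a global color permutation $\rho$ fixing $\lambda_P(X)$ pointwise; the two are equivalent, and your formulation is slightly cleaner since properness of $\rho\circ\lambda$ is immediate from $\rho$ being a bijection.
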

\begin{proof}
  Note first that the reduction can always be applied
  since if $Q\setminus \lambda_P(X)$ contains less than $|C_{\overline{M}}|$
  colors, then the instance is a no-instance.
  It is clear that the rule can be implemented in polynomial time
  using any polytime algorithm for finding a maximum
  matching. 
  Moreover, if the reduced graph has a coloring extending
  $\lambda_P$ using only the colors in $Q$, then so does the original
  graph, since the vertices in $C_{\overline{M}}$ can be colored with the colors
  removed from the original instance.

  Hence, it remains to show that if $G$ has a coloring, say $\lambda$,
  extending $\lambda_P$ using only colors in $Q$, then $G \setminus
  C_{\overline{M}}$ has a coloring extending $\lambda_P$ that uses only colors in
  $Q':=Q\setminus Q_{\overline{M}}$, where $Q_{\overline{M}}$ is the set of $|C_{\overline{M}}|$ colors from $Q
  \setminus \lambda_P(X)$ that have been removed from $Q$.

  Because of Lemma~\ref{lem:matching}, we may assume that
  $\lambda(C_{\overline{M}}) \cap P=\emptyset$. Let $B$ be the set of all vertices $v$
  in $G-C_{\overline{M}}$ with $\lambda(v) \in Q_{\overline{M}}$. If $B=\emptyset$, then $\lambda$ is a
  coloring extending $\lambda_P$ using only colors from $Q'$. Hence assume that
  $B\neq \emptyset$. Let $A$ be the set of all vertices $v$ in $C_{\overline{M}}$ with
  $\lambda(v) \in Q'$.
  Then $\lambda(A) \cap \lambda_P(X)=\emptyset$, which implies
  that every color in $\lambda(A)$ appears only in $C_{\overline{M}}$ (and exactly once in
  $C_{\overline{M}}$). Moreover, $|\lambda(A)|\geq |\lambda(B)|$. Let $\alpha$ be an arbitrary
  bijection between $\lambda(B)$ and an arbitrary subset of $\lambda(A)$ (of size $|B|$)
  and let
  $\lambda'$ be the coloring obtained from $\lambda$ by setting
  $\lambda'(v)=\alpha(\lambda(v))$ for every $v \in B$,
  $\lambda'(v)=\alpha^{-1}(\lambda(v))$ for every $v \in A$, and
  $\lambda'(v)=\lambda(v)$, otherwise. Then $\lambda'$ restricted to
  $G-C_{\overline{M}}$ is a coloring for $G-C_{\overline{M}}$ extending $\lambda_P$ using only
  colors from $Q'$. Note that $\lambda'$ is a proper coloring because
  the colors in $\lambda(A)$ are not in $P$ and hence do not appear
  anywhere else in $G$ and moreover the colors in $\lambda(B)$ do
  not appear in $\lambda(C_{\overline{M}})$.
\end{proof}
Note that after the application of Reduction Rule~\ref{red-rule:matching},
it holds that $|C'|=|C_{M}|\leq |P|\leq |D_P|\leq |D|\leq k$. Together with the
facts that $|D|\leq k$, $|C_N|\leq k$, we obtain that the reduced
graph has at most $3k$ vertices.
\begin{theorem}\label{the:pcecm-kernel}
  \PCECM{} admits a polynomial kernel with at most $3k$ vertices.
\end{theorem}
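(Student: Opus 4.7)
The plan is simply to apply Reduction Rules~\ref{red-rule:large-non-neighbor}--\ref{red-rule:matching} exhaustively, in that order, and then sum up the vertex bounds that each rule guarantees. Correctness is already handled: each rule has been shown to be safe and polynomial-time, so the composition is a polynomial-time, safe reduction; it remains only to count.

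First, I would apply Reduction Rule~\ref{red-rule:large-non-neighbor} exhaustively. By the observation immediately following that rule, every surviving vertex of $D'$ has at least $|Q|\geq |C|$ neighbors, hence at most $|D|\leq k$ non-neighbors in $C$. Consequently $|C_N|\leq |D'|\cdot k\leq k^2$. Next I would apply Reduction Rule~\ref{rr:precol-crown} exhaustively. By Hall's Theorem (Proposition~\ref{pro:hallset}), so long as $|C_N|>|D|$ there exists some $A\subseteq C_N$ with $|N_J(A)|<|A|$, and we may pick such an $A$ that is inclusion-minimal; by the definition of $C_N$, every $c\in C_N$ has at least one non-neighbor in $D'$, so $N_J(A)\cap D'\neq\emptyset$ and the rule does delete something. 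Thus after exhaustion $|C_N|\leq |D|\leq k$.

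Having bounded $|D|$ and $|C_N|$ by $k$ each, the remaining task is to bound $|C'|$. I would apply Reduction Rule~\ref{red-rule:precolCP} to eliminate all precolored vertices from $C'$ (each such vertex is the only possible carrier of its colour among $V(G)\setminus D_P$, so deleting the colour and its preimage is safe). Then a single application of Reduction Rule~\ref{red-rule:matching} suffices: it replaces $C'$ with the set $C_M$ of endpoints in $C'$ of a maximum matching $M$ of the auxiliary bipartite graph $H$ on $(C',P)$. Since $|C_M|=|M|\leq |P|\leq |D_P|\leq |D|\leq k$, we obtain $|C'|\leq k$. Safeness of this final rule is exactly Lemma~\ref{lemma:matching-rule-is-safe}, which in turn rests on Lemma~\ref{lem:matching} and Proposition~\ref{pro:matching}.

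Adding everything up, the reduced graph has
\[
|V(G)|\;=\;|D|+|C_N|+|C'|\;\leq\;k+k+k\;=\;3k
\]
vertices, which is the claimed bound. The only non-routine step is the second (exhaustion of Rule~\ref{rr:precol-crown}), since one must simultaneously argue that the Hall-violating set $A$ can be chosen inclusion-minimally (which makes the matching supplied by Hall's Theorem saturate $N_J(A)$, as used in Lemma~\ref{lemma:precol-crown}) and that $A$ witnesses a genuine deletion in $D'$; both points are already handled above and in the discussion preceding the theorem, so there is no remaining obstacle.
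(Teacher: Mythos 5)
Your proposal is correct and follows the paper's own argument exactly: apply Reduction Rules~\ref{red-rule:large-non-neighbor}--\ref{red-rule:matching} in order, invoking the corresponding safeness lemmas (in particular Lemma~\ref{lemma:precol-crown} for the crown rule and Lemma~\ref{lemma:matching-rule-is-safe} together with Proposition~\ref{pro:matching} for the matching rule), and then tally $|D|+|C_N|+|C'|\le k+k+k=3k$. The paper's proof is exactly this bookkeeping carried out in the prose preceding the theorem statement, so there is nothing to add.
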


\section{Polynomial kernel and Compression for {\RLC}}\label{sec:regpk}

We now show our polynomial kernel and compression for \RLC{}, which is
more intricate than the one for \PCECM{}.  We begin by noting, via
Banik et al.~\cite{IWOCApaper}, that we can compute a clique modulator
of size at most $2k$. Let $(G,k,L)$ be an input of \RLC.

\begin{reduction rule}
  \label{rr:rlc-cliquemod}
  If $\overline{G}$ has a matching of $k$ edges, then return a dummy
  constant-sized positive instance.  Otherwise let $M$ be a maximal
  matching in $\overline{G}$ and use $V(M)$ as a clique modulator.
\end{reduction rule}

\begin{lemma}[\cite{IWOCApaper}]
  Reduction Rule~\ref{rr:rlc-cliquemod} is correct and can be
  implemented in polynomial time.
\end{lemma}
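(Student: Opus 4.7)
The lemma asserts two things: that the rule runs in polynomial time and that it is safe. My plan is to dispatch the two branches of the rule separately, leaving the substantive combinatorial claim to a direct citation of Banik et al.

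For the polynomial-time part, I would compute a maximum matching $M^{\ast}$ in $\overline{G}$ via any classical algorithm such as Edmonds'. Testing whether $|M^{\ast}|\geq k$ is trivial. If so, we are in the first branch and emit a fixed positive dummy instance; otherwise, $M^{\ast}$ is itself a maximal matching of size at most $k-1$, and we use $V(M^{\ast})$ as the clique modulator.

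For the ``otherwise'' branch, correctness is purely structural. The size bound $|V(M)|=2|M|\leq 2(k-1)\leq 2k$ is immediate. For the clique-modulator property, I would argue by maximality of $M$: if two distinct vertices $x,y\in V(G)\setminus V(M)$ were non-adjacent in $G$, then $xy$ would be an edge of $\overline{G}$ disjoint from $V(M)$, so $M\cup\{xy\}$ would still be a matching in $\overline{G}$, contradicting maximality. Hence $V(G)\setminus V(M)$ is an independent set in $\overline{G}$, i.e., a clique in $G$, and $V(M)$ is a clique modulator of the claimed size.

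The main obstacle lies in the first branch: I must show that whenever $\overline{G}$ admits a matching of $k$ edges, $(G,k,L)$ is necessarily a yes-instance of \RLC{}, so that returning a fixed positive dummy instance is safe. This is the nontrivial combinatorial content, and I would appeal directly to Banik et al.~\cite{IWOCApaper}, who prove exactly this implication. Intuitively, $k$ disjoint non-edges in $G$ provide enough ``colour-sharing slack'' that, combined with each list containing $n-k$ colours, a proper list colouring can always be realised via a Hall-type matching argument (in the spirit of Proposition~\ref{pro:hallset}) between pair/singleton tokens on one side and the colour universe on the other; formalising this takes some care, which is why I would cite their argument rather than reproduce it in full.
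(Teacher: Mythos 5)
The paper offers no proof of this lemma at all—it simply cites Banik et al.~\cite{IWOCApaper}—so your approach of citing them for the nontrivial branch (a matching of $k$ edges in $\overline{G}$ implies a yes-instance) matches the paper exactly, and your explicit verification of the polynomial running time and of the clique-modulator property via maximality of the matching are both correct and a welcome addition.
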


Henceforth, we let $V(G)=C \cup D$ where $G[C]$ is a clique and $D$ is
a clique modulator, $|D| \leq 2k$. Let $T=\bigcup_{v \in V(G)}L(v)$.
We note one further known reduction rules for \RLC{}.
Consider the bipartite graph $H_G$ with bipartition $(V(G),T)$ having an
edge between $v \in V(G)$ and $t \in T$ if and only if $t \in L(v)$.
\begin{reduction rule}[{\cite{IWOCApaper}}]
  \label{rr:lc-colors}
  Let $T'$ be an inclusion-wise minimal subset of $T$ such that
  $|N_{H_G}(T')|<|T'|$, then remove all vertices in $N_{H_G}(T')$ from $G$.
\end{reduction rule}
Note that after an exhaustive application of Reduction
Rule~\ref{rr:lc-colors}, it holds that $|T|\leq |V(G)|$ since
otherwise Proposition~\ref{pro:hallset} would ensure the applicability
of the reduction rule. Hence in the following we will assume that
$|T|\leq |V(G)|$.

With this preamble handled, let us proceed with the kernelization.
We are not able to produce a direct `crown reduction rule' for 
\textsc{List Coloring}, as for \textsc{Pre-Coloring Extension}
(e.g., we do not know of a useful generalization of Reduction
Rule~\ref{rr:precol-crown}).
Instead, we need to study more closely which list colorings
of $G[D]$ extend to list colorings of $G$. 
For this purpose, let $H=H_G-D$ be the bipartite graph with bipartition $(C,T)$ having an edge
$\{c,t\}$ with $c \in C$ and $t \in T$ if and only if $t \in L(c)$.
Say  that a partial list coloring $\lambda_0 \colon A \to T$ is \emph{extensible}
if it can be extended to a proper list coloring $\lambda$ of $G$.
If $D \subseteq A$, then a sufficient condition for this is that $H-(A \cup \lambda_0(A))$ 
admits a matching saturating $C \setminus A$. 
(This is not a necessary condition, since some colors used in $\lambda_0(D)$
could be reused in $\lambda(C \setminus A)$, but this investigation will point in the
right direction.)
By Proposition~\ref{pro:hallset}, this is characterized by Hall sets in $H-(A \cup \lambda_0(A))$.

A Hall set $S \subseteq U$ in a bipartite graph $G'$ with bipartition
$(U,W)$ is \emph{trivial} if $N(S)=W$. 
We start by noting that if a color occurs in sufficiently many
vertex lists in $H$, then it behaves uniformly with respect to 
extensible partial colorings $\lambda_0$ as above.
\begin{lemma}
  \label{lemma:unrare-is-safe}
  Let $\lambda_0 \colon A \to T$ be a partial list coloring where 
  $|A \cap C| \leq p$ and let $t \in T$ be a color that occurs in at
  least $k+p$ lists in $C$.
  Then $t$ is not contained in any non-trivial Hall set of colors 
  in $H-(A \cup \lambda_0(A))$.
\end{lemma}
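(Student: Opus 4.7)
The plan is to proceed by contradiction. Suppose $S \subseteq T \setminus \lambda_0(A)$ is a non-trivial Hall set of colors in the bipartite graph $H' := H - (A \cup \lambda_0(A))$ that contains $t$. I will derive contradictory lower and upper bounds on $|S|$.

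First I would establish a lower bound on $|S|$ using the popularity of $t$. By hypothesis, $t$ lies in the list of at least $k+p$ vertices of $C$, i.e.\ $|N_H(t)| \geq k+p$. Passing from $H$ to $H'$ only affects $t$'s neighborhood by deleting the vertices in $A \cap C$, of which there are at most $p$ (the removal of $\lambda_0(A)$ on the color side only deletes other color vertices, not neighbors of $t$; note that $t \in S$ ensures $t \notin \lambda_0(A)$, so $t$ itself survives). Thus $|N_{H'}(t)| \geq (k+p) - p = k$, and since $t \in S$ we get $|N_{H'}(S)| \geq |N_{H'}(t)| \geq k$. The Hall-set inequality $|S| > |N_{H'}(S)|$ then yields $|S| \geq k+1$.

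Next I would use non-triviality to obtain a matching upper bound on $|S|$. Since $S$ is non-trivial, there exists some $c \in C \setminus A$ with $c \notin N_{H'}(S)$. Unfolding the definition of $H'$, this means $(L(c) \setminus \lambda_0(A)) \cap S = \emptyset$, and because $S \subseteq T \setminus \lambda_0(A)$ this simplifies to $L(c) \cap S = \emptyset$. Hence $S \subseteq T \setminus L(c)$. Recall that after exhaustive application of Reduction Rule~\ref{rr:lc-colors} we have $|T| \leq |V(G)| = n$, while every list in a \RLC{} instance has size exactly $n-k$. Therefore
\[
|S| \leq |T| - |L(c)| \leq n - (n-k) = k,
\]
which contradicts $|S| \geq k+1$ established in the previous step.

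I do not anticipate any substantial obstacle: the argument is essentially a double counting of $|S|$, from the popularity of $t$ on the one hand and from the existence of an uncovered $c$ together with $|L(c)| = n-k$ on the other. The only mild subtlety is correctly tracking the effect of removing $A$ and $\lambda_0(A)$ on the neighborhood of $t$ in $H'$, which is handled by the bounds $|A \cap C| \leq p$ and $t \notin \lambda_0(A)$.
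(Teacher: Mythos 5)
Your proof is correct and follows essentially the same approach as the paper: both use the non-triviality of $S$ to extract a vertex $c\in C\setminus(A\cup N_{H'}(S))$, obtain $S\subseteq T\setminus L(c)$ and hence $|S|\leq k$ via $|T|\leq n$ and $|L(c)|=n-k$, and then combine this with the popularity of $t$ (losing at most $|A\cap C|\leq p$ neighbors when passing to $H'$) to get a contradiction. The paper phrases the final step as an upper bound on the degree of every color in $S$ (at most $|N_{H'}(S)|+|A\cap C|<k+p$), whereas you phrase it as a lower bound $|S|\geq k+1$ coming from $|N_{H'}(t)|\geq k$; these are the same counting argument reorganized.
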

\begin{proof}
  Let $H'=H-(A \cup \lambda_0(A))$. 
  Consider any Hall set of colors $S \subset (T \setminus \lambda_0(A))$
  and any vertex $v \in C \setminus (A \cup N_{H'}(S))$ (which exists
  assuming $S$ is non-trivial). Then 
  $S \subseteq T \setminus L(v)$, hence $|S| \leq k$,
  and by assumption $|N_{H'}(S)|<|S|$. But for every $t' \in S$,
  we have $N_H(t') \subseteq N_{H'}(S) \cup (A \cap C)$, hence
  $t'$ occurs in at most $|N_{H'}(S) \cup (A \cap C)| < k+p$ vertex 
  lists in $C$. Thus $t \notin S$. 
\end{proof}

In the following, we will assume that $n\geq 11k$.\footnote{The constants $11k$ and $6k$
  in this paragraph are chosen to make the arguments work smoothly. A smaller kernel is possible with a more
  careful analysis and further reduction rules.}
This is safe, since otherwise (by Reduction Rule~\ref{rr:lc-colors}) 
we already have a kernel with a linear number of vertices and colors.   
We say that a color $t \in T$ is \emph{rare} if it occurs in at most
$6k$ lists of vertices in $C$.

\begin{lemma}
  If $n \geq 11k$, then there are at most $3k$ rare colors.
\end{lemma}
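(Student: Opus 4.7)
The plan is to do a straightforward double counting on the bipartite ``incidence'' structure between $C$ and $T$. The key inputs are (i) every vertex $v \in C$ has a list of size exactly $n-k$, (ii) after Reduction Rule~\ref{rr:lc-colors} we have $|T| \leq n$, and (iii) the clique modulator bound $|D| \leq 2k$, so $|C| \geq n - 2k \geq 9k$ under the hypothesis $n \geq 11k$.

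The first step is to bound, for each $v \in C$, the number of colors of $T$ that do \emph{not} appear in $L(v)$: since $|L(v)| = n-k$ and $|T|\leq n$, each such $v$ is missing at most $k$ colors. Summing over $v \in C$, the total number of ``missing'' incidences $(v,t)$ with $v \in C$, $t \in T \setminus L(v)$ is at most $|C|\cdot k$. The second step is to lower bound the same quantity from the colors' side: if $t \in T$ is rare, then $t$ occurs in at most $6k$ lists from $C$, so $t$ is missing from at least $|C|-6k$ lists in $C$, and since $|C|\geq 9k$ this is at least $3k$.

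Combining the two counts, if $r$ denotes the number of rare colors then $r(|C|-6k) \leq |C|\cdot k$. Rearranging and using $|C| \geq 9k$ gives $r \leq |C|k/(|C|-6k) \leq 3k$ (the right-hand side is a decreasing function of $|C|$ for $|C|>6k$, attaining its maximum of $3k$ precisely at $|C|=9k$). This completes the bound.

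There is no real obstacle here beyond checking that the inequality $|C|k/(|C|-6k) \leq 3k$ is equivalent to $|C|\geq 9k$, which is exactly what the constants $11k$ and $6k$ were chosen to guarantee; the footnote after the definition of ``rare'' flags that these constants are tuned to make this arithmetic work out cleanly.
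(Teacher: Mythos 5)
Your proposal is correct and is essentially the same double-counting argument the paper gives: both count the ``missing'' incidences $(v,t)$ with $v\in C$, $t\in T\setminus L(v)$ once from the vertex side (at most $|C|k$) and once from the color side (at least $|C|-6k$ per rare color), then use $|C|\geq 9k$. No substantive difference.
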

\begin{proof}
  Let $S=\{t \in T \mid d_H(t)<6k\}$. For every $t \in S$,
  there are $|C|-6k$ ``non-occurrences'' (i.e., vertices
  $v \in C$ with $t \notin L(v)$), and there are $|C|k$ non-occurrences in
  total. Thus 
  \[
  |S| \cdot (|C|-6k) \leq |C|k \quad \Rightarrow \quad |S| \leq \frac{|C|}{|C|-6k}k
  = (1+\frac{6k}{|C|-6k})k,
  \]
  where the bound is monotonically decreasing in $|C|$ and maximized (under
  the assumption that $n\geq 11k$ and hence $|C|\geq 9k$) for $|C|=9k$ yielding $|S| \leq 3k$.
\end{proof}

Let $T_R \subseteq T$ be the set of rare colors. 
Define a new auxiliary bipartite graph $H^*$ with bipartition $(C,D
\cup T_R)$ having an edge between a vertex $c \in C$ and a vertex $d
\in D$ if $\{c,d\} \notin E(G)$ and an edge between a vertex $c \in C$
and a vertex $t \in T_R$ if $t \in L(c)$.
Refer to the colors $T_R \setminus X$ as \emph{constrained} rare colors. 
Note that constrained rare colors only occur on lists of vertices in
$D \cup (C \cap X)$. 

Let $T'=T\setminus (T_R\setminus X)$, $V'=(D\setminus
X)\cup(C\cap X)$, and set $q=|T'|-|C \setminus X|$.
Before we continue, we want to provide some useful
observations about the sizes of the considered sets and numbers.
\begin{observation}\label{obs:sizes}
  \begin{itemize}
  \item $|X|\leq |D|+|T_R|\leq 5k$,
  \item $|V'|\leq |D|+|X|\leq 7k$,
  \item $q \leq |T| - |C| + |C \cap X| \leq 
    |D|+|X|\leq 7k$; this holds because $|T| \leq |V|=|C|+|D|$.
  \end{itemize}
\end{observation}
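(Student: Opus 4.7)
The plan is to verify each of the three inequalities by direct set-algebra, using $|D|\leq 2k$ (from Reduction Rule~\ref{rr:rlc-cliquemod}), $|T_R|\leq 3k$ (from the preceding lemma bounding the rare colors), and $|T|\leq |V(G)|=|C|+|D|$ (from exhaustive application of Reduction Rule~\ref{rr:lc-colors}). Each of the three bounds will then fall out by unwinding the definitions of $X$, $V'$, $T'$, and $q$ given just above the observation.

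For the first bound $|X|\leq |D|+|T_R|\leq 5k$, I would appeal to how $X$ is selected from the auxiliary bipartite graph $H^*$: since the underlying construction (a matching, or cover, on $H^*$) controls $|X|$ by the size of the side $D\cup T_R$, we obtain $|X|\leq |D|+|T_R|\leq 2k+3k=5k$. The second bound, $|V'|\leq |D|+|X|\leq 7k$, is then immediate from $V'=(D\setminus X)\cup(C\cap X)\subseteq D\cup X$, which yields $|V'|\leq |D|+|X|\leq 2k+5k=7k$.

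For the bound on $q$, I start from $q=|T'|-|C\setminus X|$ with $T'=T\setminus(T_R\setminus X)$ and expand both cardinalities to get
\[
q = (|T|-|T_R\setminus X|) - (|C|-|C\cap X|) = \bigl(|T|-|C|+|C\cap X|\bigr)-|T_R\setminus X|\leq |T|-|C|+|C\cap X|.
\]
Combining this with $|T|\leq |C|+|D|$ (so that $|T|-|C|\leq |D|$) and $|C\cap X|\leq |X|\leq 5k$ then gives $q\leq |D|+|X|\leq 2k+5k=7k$. I do not expect any genuine obstacle; the only delicate point is the first inequality, which depends on the specific construction of $X$ from $H^*$ being cardinality-bounded by $|D\cup T_R|$, while the remaining two bounds are elementary consequences of set identities together with the Reduction Rule~\ref{rr:lc-colors} bound on $|T|$.
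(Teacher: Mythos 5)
Your argument is correct, and since the paper states this as an \emph{Observation} without any accompanying proof, the most useful thing to compare against is the surrounding definitions. The second and third bullets are exactly the elementary set algebra you give: $V'=(D\setminus X)\cup(C\cap X)\subseteq D\cup X$ gives $|V'|\le|D|+|X|$, and expanding $q=|T'|-|C\setminus X|$ via $|T'|=|T|-|T_R\setminus X|$ and $|C\setminus X|=|C|-|C\cap X|$, then dropping the nonnegative $|T_R\setminus X|$ term and invoking $|T|\le|C|+|D|$ (from exhaustive application of Reduction Rule~\ref{rr:lc-colors}) and $|C\cap X|\le|X|$, gives precisely the stated chain. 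You were right to flag the first bullet as the only nontrivial step: the set $X$ is in fact never explicitly defined in the paper's text, which is an expository gap. From its uses in the proof of Lemma~\ref{lem:compression} (``$M$ matches every vertex of $X$ in $H^*$ with a vertex not in $X$'', and the fact that both $C\cap X$ and $T_R\cap X$ are meaningful), $X$ must be a minimum vertex cover of $H^*$ corresponding to a maximum matching $M$; by K\H{o}nig's theorem $|X|=|M|\le\min(|C|,|D|+|T_R|)\le|D|+|T_R|\le 2k+3k=5k$, which is exactly the cardinality control you posited. So your proposal is sound and matches the intended argument; the one thing you could make fully precise is the K\H{o}nig-theorem justification of $|X|\le|D|+|T_R|$, once the definition of $X$ is pinned down.
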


\begin{lemma}\label{lem:compression}
  Assume $n \geq 11k$. Then
  $G$ has a list coloring if and only if there is a partial 
  list coloring $\lambda_0 \colon V' \to T$
  that uses at most $q=|T'|-|C\setminus X|$ colors from $T'$.
\end{lemma}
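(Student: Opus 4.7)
The plan is to handle the two directions separately. For the forward direction, given a proper list colouring $\lambda$ of $G$, I would set $\lambda_0 = \lambda\vert_{V'}$ and count the $T'$-colours it uses. The crucial observation is that $C \setminus X$, being fully adjacent (in $G$) to $V'$, uses $|C \setminus X|$ distinct colours that all live in $T'$. Indeed, $L(c) \subseteq T'$ for $c \in C \setminus X$, because constrained rare colours only occur on lists of $D \cup (C \cap X)$; $|\lambda(C \setminus X)| = |C \setminus X|$ because $C$ is a clique; and every $v \in V'$ is adjacent in $G$ to every $c \in C \setminus X$: to $v \in V' \cap C = C \cap X$ via the clique, and to $v \in V' \cap D = D \setminus X$ via the vertex-cover property of $X$ in $H^*$ (a non-$G$-edge between such $c$ and $v$ would be an $H^*$-edge with neither endpoint in the cover $X$). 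Hence $|\lambda_0(V') \cap T'| + |C \setminus X| \leq |T'|$, giving $|\lambda_0(V') \cap T'| \leq q$.

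For the backward direction, given $\lambda_0$ with the stated bound, I would extend in two phases. Phase~1 extends $\lambda_0$ greedily to $D \cap X$: each such $d$ has $|L(d)| = n-k$ and at most $|V'| + |D \cap X| \leq 9k$ colours are forbidden by already-coloured neighbours, leaving $\geq n - 10k \geq k$ valid choices since $n \geq 11k$. To be economical with the $T' \setminus \lambda_0(V')$-colours needed by Phase~2, I prefer colours in $(T_R \setminus X) \cup \lambda_0(V')$ whenever available. Phase~2 then colours $C \setminus X$ by finding a perfect matching in the bipartite graph $B$ on $(C \setminus X,\, T' \setminus \lambda_0(V'))$ with an edge $\{c,t\}$ iff $t \in L(c)$ and $t$ is not used by Phase~1 on any $D \cap X$-neighbour of $c$.

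The crux is verifying Hall's condition for $B$. I argue by contradiction: a Hall set $S \subseteq C \setminus X$ produces $T^* = (T' \setminus \lambda_0(V')) \setminus L(S)$, and the hypothesis $|T' \setminus \lambda_0(V')| \geq |C \setminus X|$ forces $|T^*| > |(C \setminus X) \setminus S|$, making $T^*$ a non-trivial Hall set of colours in $H - V' - \lambda_0(V')$. Lemma~\ref{lemma:unrare-is-safe} applied with $A = V'$ (so $p = |C \cap X| \leq 5k$ and its threshold $k + p$ lies at most at the rare threshold $6k$) then forces every colour in $T^*$ to be rare; combined with $T^* \subseteq T'$, this gives $T^* \subseteq T_R \cap X$, so $|T^*| \leq |T_R| \leq 3k$. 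The main obstacle is completing this to a contradiction while accounting for the up-to-$2k$ slack introduced by Phase~1's colouring of $D \cap X$; this closes thanks to the list-size lower bound $|L(c)| = n-k$ (which yields $|L(S) \setminus \lambda_0(V')| \geq n - k - q$, leaving $B$-side neighbourhoods large) together with the assumption $n \geq 11k$, though the bookkeeping must be done carefully.
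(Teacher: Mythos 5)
Your forward direction is correct and matches the paper's argument: $L(c) \subseteq T'$ for $c \in C\setminus X$, the clique $C\setminus X$ uses $|C\setminus X|$ distinct colors from $T'$, and full $G$-adjacency of $V'$ to $C\setminus X$ (via $C$ being a clique and $X$ being a vertex cover of $H^*$) forces $\lambda_0(V')\cap T'$ to be disjoint from $\lambda(C\setminus X)$, giving $|\lambda_0(V')\cap T'| \leq q$.

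The backward direction, however, has a genuine gap, and your own sketch points at it without closing it. Two distinct problems arise. First, the quantity $T^* = (T'\setminus\lambda_0(V'))\setminus L(S)$ is not shown to be a Hall set of colors in $H - V' - \lambda_0(V')$. From $|N_B(S)| < |S|$ one only gets $|T^*| > |(C\setminus X)\setminus S| - |D\cap X|$, since up to $|D\cap X| \leq 2k$ colors of $T'\setminus\lambda_0(V')$ may lie in $L(S)$ yet be blocked for every $c\in S$ by being used on a $(D\cap X)$-neighbour; these colors are in $L(S)$ but not in $N_B(S)$. The slack of $2k$ is not cosmetic: it prevents the conclusion $|T^*| > |N_{H'}(T^*)|$, so Lemma~\ref{lemma:unrare-is-safe} simply cannot be invoked. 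Your sketch acknowledges this but the proposed fix (``list-size lower bound plus $n\geq 11k$, bookkeeping carefully'') does not obviously close it; in fact, even if one granted $T^*\subseteq T_R\cap X$ (hence $|T^*|\leq 3k$), this yields only $|(C\setminus X)\setminus S| < 5k$, which is consistent, not contradictory.

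Second, and more structurally, your greedy Phase~1 with a ``preference'' for $(T_R\setminus X)\cup\lambda_0(V')$ does not ensure that all rare colors get used, and that is the load-bearing step in the paper. The paper's proof modifies $\lambda_0$ using a matching $M_0$ saturating $T_R\setminus X$ and then the maximum matching $M$ of $H^*$: Phase~1 there places every unused color of $T_R\cap X$ on a vertex of $C\setminus X$ via the $M$-edge covering it, and Phase~2 pairs each uncolored $v\in D\cap X$ with its $M$-partner $u\in C\setminus X$ and gives both the same fresh shared color. The second point is crucial: because $uv\notin E(G)$ (an $H^*$-edge), each such step adds one vertex to each side of the balance, so $|T\setminus\lambda(A)| - |C\setminus A|$ is preserved, and since all rare colors are already used, $H'$ contains no non-trivial Hall set (only rare colors can appear in one, by Lemma~\ref{lemma:unrare-is-safe}). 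Your approach discards both the ``use all rare colors'' invariant and the $H^*$-matching pairing, which is precisely what makes the paper's Hall-set argument close cleanly; without them, the $2k$ slack is fatal. To repair your proof, you would essentially need to reintroduce the matching-based placement rather than greedy extension.
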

\begin{proof}
  The number of colors usable in $C \setminus X$
  is $|T'|-p$ where $p$ is the number counted above
  (since constrained rare colors cannot be used in $C \setminus X$
  even if they are unused in $\lambda_0$). 
  Thus it is a requirement that $|T'|-p \geq |C \setminus X|$.
  That is, $p \leq |T'|-|C \setminus X|=q$. 
  Thus necessity is clear. 
  We show sufficiency as well. That is, let $\lambda_0$ be a partial 
  list coloring with scope $V'=(C \cap X) \cup (D \setminus X)$
  which uses at most $q$ colors of $T'$.
  We modify and extend $\lambda_0$ to a list coloring of $G$.

  First let $H_0$ be the bipartite graph with bipartition
  $(V,T_R\setminus X)$ and let $M_0$ be a matching saturating
  $T_R\setminus X$; note that this
  exists by reduction rule~\ref{rr:lc-colors}. We modify $\lambda_0$ 
  to a coloring $\lambda_0'$ so that every constrained rare color is used
  by $\lambda_0'$, by iterating over every color $t \in T_R \setminus X$; for every $t$,
  if $t$ is not yet used by $\lambda_0'$, then let $vt \in M_0$ and 
  update $\lambda_0'$ with $\lambda_0'(v)=t$. Note that the scope
  of $\lambda_0'$ after this modification is contained in 
  $(C \cap X) \cup D$.  Next, let $M$ be a maximum matching in $H^*$. 
  We use $M$ to further extend $\lambda_0'$ in stages to a partial list
  coloring $\lambda$ which colors all of $D$ and uses all rare colors. 
  In phase 1, for every color $t \in T_R \cap X$ which is not already
  used, let $vt \in M$ be the edge covering $t$ and assign $\lambda(v)=t$.
  Note that $M$ matches every vertex of $X$ in $H^*$ with a vertex not in $X$, 
  thus the edge $vt$ exists and $v$ has not yet been assigned in $\lambda$.
  Hence, at every step we maintain a partial list coloring,
  and at the end of the phase all rare colors have been assigned. 
  Finally, as phase 2, for every vertex $v \in D \cap X$ not yet assigned,
  let $uv \in M$ where $u \in C$; necessarily $u \in C \setminus X$ and $u$ is 
  as of yet unassigned in $\lambda$. The number of colors assigned in
  $\lambda$ thus far is at most $|X|+|D| \leq |T_R|+2|D|\leq 7k$, whereas
  $|L(u) \cap L(v)| \geq n-2k \geq 9k$, hence there always exists 
  an unused shared color that can be mapped to $\lambda(u)=\lambda(v)$.
  Let $\lambda$ be the resulting partial list coloring.
  We claim that $\lambda$ can be extended to a list coloring of $G$. 
  
  Let $A$ be the scope of $\lambda$ and let $H'=H-(A \cap \lambda(A))$.
  Note that $A \cap C \subseteq V(M)$, hence $|A \cap C|\leq |D|+|T_R|\leq 5k$.
  Thus by Lemma~\ref{lemma:unrare-is-safe}, no non-trivial Hall set in $H'$ can contain a rare color.
  However, all rare colors are already used in $\lambda$. Thus $H'$
  contains no non-trivial Hall set of colors. Thus the only possibility
  that $\lambda$ is not extensible is that $H'$
  has a trivial Hall set, i.e., $|T \setminus \lambda(A)|<|C \setminus A|$.
  However, note that every modification after $\lambda_0'$ added one vertex
  to $A$ and one color to $\lambda(A)$, hence the balance between the two
  sides is unchanged. Thus already the partial coloring $\lambda_0'$ leaves behind
  a trivial Hall set. However, $\lambda_0'$ colors precisely $C \cap X$ in $C$
  and leaves at least $|T'|-q$ colors remaining. 
  By design this is at least $|C \setminus X|$, yielding a contradiction. 
  Thus we find that $H'$ contains no Hall set, and $\lambda$
  is a list coloring of $G$. 
\end{proof}

Before we give our compression and kernelization results, we need the
following simple auxiliary lemma.
\begin{lemma}\label{lem:tprime-universal}
  $T'$ contains at least $|T'|-|V'|k$ colors that
  are universal to all vertices in $V'$.
\end{lemma}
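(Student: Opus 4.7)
The plan is a straightforward double-counting argument using the key structural assumption of \RLC{} that every list $L(v)$ has size exactly $n-k$. First, I would recall that after exhaustive application of Reduction Rule~\ref{rr:lc-colors} we have $|T|\leq n$. Consequently, for every vertex $v\in V(G)$, the set of colors in $T$ that are missing from $L(v)$ has size at most $|T|-(n-k)\leq k$.

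Next I would apply a union bound over $V'$: the set
\[
T_{\mathrm{bad}}:=\SB t\in T \SM \exists v\in V' : t\notin L(v) \SE
\]
of colors in $T$ that fail to be universal on $V'$ satisfies
\[
|T_{\mathrm{bad}}|\;\leq\;\sum_{v\in V'}|T\setminus L(v)|\;\leq\;|V'|\cdot k.
\]
Since $T'\subseteq T$, the number of non-universal colors inside $T'$ is at most $|T_{\mathrm{bad}}|\leq|V'|k$, so $T'$ contains at least $|T'|-|V'|k$ colors that lie in every list $L(v)$ with $v\in V'$, which is exactly what the lemma asserts.

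There is really no obstacle here; the only point one must be careful about is to invoke the bound $|T|\leq n$ guaranteed by Reduction Rule~\ref{rr:lc-colors} (otherwise the per-vertex bound on missing colors would not be~$k$), and to note that the estimate on non-universal colors applies uniformly to any subset of $T$, in particular to $T'$.
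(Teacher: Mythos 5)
Your proof is correct and is essentially the same short double-counting argument the paper uses; the paper simply compresses the step you spell out (that $|T|\leq n$ plus $|L(v)|=n-k$ gives at most $k$ missing colors per vertex of $V'$) into the single sentence ``the list of every vertex $v\in V'$ misses at most $k$ colors from $T'$.''
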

\begin{proof}
  The list of every vertex $v \in V'$ misses at most $k$ colors from
  $T'$. Hence all but at most $|V'|k$ colors in $T'$ are universal to
  all vertices in $V'$.
\end{proof}

For clarity, let us define the output problem of our compression explicitly.

\begin{problem}[]{\textsc{Budget-Constrained List Coloring}}
  \Input & A graph $G$, a set $T$ of colors, a list $L(v) \subseteq T$ for every $v \in V(G)$,
  and a pair ($T', q)$ where $T' \subseteq T$ and $q \in \mathbb{N}$.\\
  \Prob  & Is there a proper list coloring for $G$ that uses at most $q$ distinct colors
  from $T'$?
\end{problem}

\begin{theorem}
  \RLC{} admits a compression into an instance of \textsc{Budget-Constrained List Coloring}
  with at most $11k$ vertices and $\bigoh(k^2)$ colors, encodable in $\bigoh(k^2 \log k)$ bits. 
\end{theorem}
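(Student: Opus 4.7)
The plan is to apply Lemma~\ref{lem:compression}: under the assumption
$n\geq 11k$ (otherwise the input already has fewer than $11k$ vertices and
$|T|\leq n<11k$ colors, so we can output it directly as a
\textsc{Budget-Constrained List Coloring} instance with $T'=T$ and $q=|T|$),
list-colorability of $G$ is equivalent to the existence of a proper partial
list coloring $\lambda_0\colon V'\to T$ that uses at most
$q=|T'|-|C\setminus X|$ distinct colors from $T'$. This is already in the
form of a \textsc{Budget-Constrained List Coloring} instance with underlying
graph $G[V']$, induced lists, and budget pair $(T',q)$. By
Observation~\ref{obs:sizes} we have $|V'|\leq 7k\leq 11k$, handling the
vertex count. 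The remaining task is to shrink the color palette to
$\bigoh(k^2)$ while preserving the answer, and then to bound the bit-complexity.

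For the palette reduction, partition $T$ into three groups: the constrained
rare colors $T\setminus T' = T_R\setminus X$ (of total size at most $3k$); the
\emph{non-universal} colors in $T'$, i.e., those missing from $L(v)$ for
some $v\in V'$; and the \emph{universal} colors in $T'$, contained in $L(v)$
for every $v\in V'$. By Lemma~\ref{lem:tprime-universal}, the non-universal
colors number at most $|V'|k\leq 7k^2$. Both groups are kept in the
compressed palette unchanged. For the universal colors we keep an arbitrary
$u':=\min(u,|V'|)$ of them, where $u$ is their original number, and drop
the rest from every list; the value of $q$ is left unchanged. Correctness
of this truncation is the main subtlety: any proper list coloring of $G[V']$
uses at most $|V'|$ distinct universal colors from $T'$, so a solution of the
original problem can be re-routed by any injection onto the $u'$ surviving
universal colors without changing the number of distinct $T'$-colors used;
conversely, any solution of the compressed instance is a valid partial list
coloring for the original, since the surviving universal colors are genuine
elements of $T'$ that lie in every $L(v)$ for $v\in V'$.

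The compressed palette therefore has size $3k + 7k^2 + 7k = \bigoh(k^2)$,
so each color fits in $\bigoh(\log k)$ bits. The graph $G[V']$ on at most
$7k$ vertices takes $\bigoh(k^2)$ bits as an adjacency matrix. Each list
$L(v)$ for $v\in V'$ misses at most $k$ colors of the original palette
(since $|L(v)|=n-k$ and $|T|\leq n$), and dropping universal colors introduces
no new missing colors in the compressed palette, so each compressed list
still misses at most $k$ colors; listing these costs $\bigoh(k\log k)$ bits per
vertex, for a total of $|V'|\cdot \bigoh(k\log k) = \bigoh(k^2\log k)$ bits.
Finally, $T'$ is described by an $\bigoh(k^2)$-bit indicator on the palette,
and $q$ fits in $\bigoh(\log k)$ bits. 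The overall encoding length is
$\bigoh(k^2\log k)$, completing the compression.
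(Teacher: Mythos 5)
Your proof is correct. The overall structure matches the paper: fall back to the trivial case when $n < 11k$, invoke Lemma~\ref{lem:compression} and Observation~\ref{obs:sizes} to get $|V'| \leq 7k$, use Lemma~\ref{lem:tprime-universal} to bound the non-universal colors, and encode lists by their complements using the $\leq k$ missing colors per vertex. The one place where you genuinely diverge is the color-shrinking step. The paper splits into two cases: if $|T'| < |V'|k + q$ the palette is already $\bigoh(k^2)$, and otherwise it \emph{discards all non-universal colors from $T'$} and keeps exactly $q$ universal colors, so the remaining palette is $|T_R| + q = \bigoh(k)$. Your rule is uniform: always keep the non-universal colors (at most $7k^2$ of them) and truncate the universal colors to $\min(u,|V'|)$. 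This avoids the case split and the rerouting argument you give (any injection of the used universal colors onto the surviving ones, a bijection of color classes hence preserving properness and the count of distinct $T'$-colors) is clean; the paper's variant buys a slightly smaller palette in its second branch but needs the stronger rerouting of \emph{all} $T'$-colors onto universal ones, which is only available once $T'$ is large enough to guarantee $q$ universal colors. Both achieve the stated $\bigoh(k^2)$ colors and $\bigoh(k^2 \log k)$ bit bound.
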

\begin{proof}
  If $|V(G)|\leq 11k$, then $G$ itself can be used as the output (with a dummy budget
  constraint). Otherwise, all the bounds above apply and 
  Lemma~\ref{lem:compression} shows that the existence of a list
  coloring in $G$ is equivalent to the existence of a list coloring in
  $G[V']$ that uses at most $q$ colors from $T'$. Since $|V'|\leq 7k$,
  it only remains to reduce the number of colors
  in $T_R\cup T'$. Clearly, if $|T'|<|V'|k+q$, then $|T_R\cup T'|\leq
  3k+(7k)k\in \bigoh(k^2)$ and there is nothing left to show.
  So suppose that $|T'|\geq|V'|k+q$. Then, it follows from
  Lemma~\ref{lem:tprime-universal} that $T'$ contains at
  least $q$ colors that are universal to the vertices in $V'$ and we
  obtain an equivalent instance by removing all but exactly $q$
  universal colors from $T'$, which leaves us with an instance with at
  most $|T_R|+q\leq 3k+7k^2\in \bigoh(k^2)$ colors, as required. 
  Finally, to describe the output concisely, note that $G[V']$ can be trivially described
  in $\bigoh(k^2)$ bits, and the lists $L(v)$ can be described by enumerating
  $T \setminus L(v)$ for every vertex $v$, which is $k$ colors per vertex,
  each color identifiable by $\bigoh(\log k)$ bits. 
\end{proof}

Note that the compression is asymptotically essentially optimal, since even the basic
\textsc{4-Coloring} problem does not allow a compression in $\bigoh(n^{2-\varepsilon})$
bits for any $\varepsilon > 0$ unless the polynomial hierarchy collapses~\cite{JansenP17}.
For completeness, we also give a proper kernel.

\begin{theorem}
  \RLC{} admits a kernel with $\bigoh(k^2)$ vertices and colors.
\end{theorem}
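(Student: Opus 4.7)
The plan is to strengthen the preceding compression into a full kernel by giving a polynomial-time reduction from \textsc{Budget-Constrained List Coloring} back to \RLC{} whose size overhead is only a constant factor; composing with the compression then yields the claimed $\bigoh(k^2)$ kernel, as noted in the Preliminaries. Thus the first step is to invoke the previous theorem and reduce to an equivalent \textsc{Budget-Constrained List Coloring} instance $(G, T, L, T', q)$ with $|V(G)| \leq 11k$ and $|T| \in \bigoh(k^2)$.

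The core step is to realise the budget constraint and the uniform list-size requirement via two gadgets. I would first attach a \emph{budget clique} $K$ on $|T'|-q$ new vertices, internally a clique and fully adjacent to $V(G)$, with every $w \in K$ given list $T'$; any proper coloring then spends $|T'|-q$ distinct $T'$-colors on $K$, leaving at most $q$ colors of $T'$ available for $V(G)$. I would then attach a \emph{poisoning clique} $\tilde K$ on $m_0 := \max\{|T'|,\max_{v \in V(G)} |L(v)|\}$ new vertices, internally a clique and fully adjacent to $V(G) \cup K$, with each vertex of $\tilde K$ given a common list $P$ consisting of $m_0$ fresh padding colors disjoint from $T$. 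Since $\tilde K$ is a clique with common list $P$ of size $m_0$, every $P$-color is used on $\tilde K$, so $P$ is inert on $V(G) \cup K$. Finally, each list $L(u)$ for $u \in V(G) \cup K$ is padded with an arbitrary $(m_0 - |L(u)|)$-subset of $P$, equalising all lists at size exactly $m_0$. Choosing $k^* := n^* - m_0 = |V(G)| + (|T'|-q)$, where $n^* = |V(G)| + |K| + |\tilde K|$, yields a valid \RLC{} instance with $n^* \in \bigoh(k^2)$ vertices, $|T| + m_0 \in \bigoh(k^2)$ colors, and parameter $k^* \in \bigoh(k^2)$.

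The correctness check is routine from the gadget semantics. In the forward direction, any budget-respecting list coloring $\lambda$ of $G$ extends by bijecting $K$ with the (at least $|T'|-q$) colors of $T' \setminus \lambda(V(G))$ and $\tilde K$ with $P$. In the reverse direction, the poisoning clique forces $\lambda^*(V(G) \cup K) \subseteq T$; the budget clique then uses $|T'|-q$ distinct $T'$-colors, and adjacency with $K$ caps $|\lambda^*(V(G)) \cap T'|$ at $q$, so $\lambda^*$ restricted to $V(G)$ is a valid budget-respecting solution.

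The main obstacle I anticipate is achieving uniform list size $m_0 \in \bigoh(k^2)$ without inflating the total palette to $\bigoh(k^3)$; a naive private-color padding of each of the $\bigoh(k)$ core vertices up to size $\bigoh(k^2)$ would already cost $\bigoh(k^3)$ colors, violating the target bound. The poisoning-clique gadget sidesteps this by sharing a single size-$m_0$ palette $P$ across every pad-requiring vertex, with $\tilde K$ collectively absorbing $P$ so that the shared padding colors are simultaneously sufficient and harmless on the original graph.
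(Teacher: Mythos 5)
Your proposal is correct, and it reaches the stated $\bigoh(k^2)$ bound, but it takes a genuinely different route from the paper's proof. You treat the compression as a black box and give a generic polynomial-time reduction from \textsc{Budget-Constrained List Coloring} back to \RLC{}: a \emph{budget clique} $K$ of $|T'|-q$ fresh vertices with list $T'$, fully adjacent to $V(G)$, structurally enforces that at most $q$ colors of $T'$ survive for $V(G)$, and a \emph{poisoning clique} $\tilde K$ on a shared fresh palette $P$ of size $m_0$ both absorbs all padding colors and lets you equalize lists at size $m_0$ without a palette blowup. The correctness argument you sketch (forward: color $K$ from $T'\setminus\lambda(V(G))$ and $\tilde K$ bijectively with $P$; reverse: $\tilde K$ forces $\lambda^*(V(G)\cup K)\subseteq T$, then $K$ caps $|\lambda^*(V(G))\cap T'|$ at $q$) is sound, and your accounting gives $n^*\in\bigoh(k^2)$, $\bigoh(k^2)$ colors, and $k^*=n^*-m_0\in\bigoh(k^2)$, which suffices for a kernel.

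The paper's proof does not reduce \textsc{Budget-Constrained List Coloring} generically; instead it exploits structure that is specific to the compressed instance. It splits into cases on whether $|T'|<|V'|k+q$. In the first case the instance is already $\bigoh(k^2)$ in both vertices and colors and nothing more is done. In the second case it uses Lemma~\ref{lem:tprime-universal} to find $q$ \emph{universal} colors $U\subseteq T'$ and observes that the budget constraint can be discharged entirely by restricting the palette to $(T_R\setminus X)\cup U$ — no clique gadget is needed to enforce a budget, because with $|T'|$ cut to $q$ the budget is automatic. It then restores regularity with a poisoning set $C_N$ of only $|T_N|\in\bigoh(k)$ vertices (attached just to $(C\cap X)\cup C_N$, preserving $D$ as a clique modulator), yielding in that case a kernel with $\bigoh(k)$ vertices and colors. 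So the paper's route is more specialized but tighter in one branch, while yours is a cleaner, reusable back-reduction that always lands at $\bigoh(k^2)$; both establish the theorem as stated. Your budget clique $K$ can have $\Theta(k^2)$ vertices (when $|T'|-q$ is large), which is precisely what the paper's universal-color trick avoids — worth noting if you want to sharpen to an $\bigoh(k)$-vertex kernel in the nontrivial case.
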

\begin{proof}
  We distinguish two cases depending on whether or not $|T'|< |V'|k+q$.
  If $|T'|< |V'|k+q$, then $|T|\leq
  |T_R|+|T'|< 3k+|V'|k+q\leq 3k+(7k)(k+1)\in \bigoh(k^2)$.
  Since a list coloring requires at least one distinct color for
  every vertex in $C$, it holds that $|C|\leq |T|\leq  3k+(7k)(k+1)$ and hence
  $|V(G)|\leq (3+7k)k+2k \in \bigoh(k^2)$, implying the desired kernel.

  If on the other hand, $|T'|\geq |V'|k+q$, then, because of
  Lemma~\ref{lem:tprime-universal} it holds that $T'$ contains a set
  $U$ of exactly $q$ colors that are universal to the vertices in $V'$.
  Recall that Lemma~\ref{lem:compression} shows that the existence of a list
  coloring in $G$ is equivalent to the existence of a list coloring in
  $G[V']$ that uses at most $q=|T'|-|C \setminus X|$ colors from $T'$.
  It follows that the graph $G[V']$ has a list coloring using only
  colors in $(T_R\setminus X)\cup U$ if and only if $G$ has a list
  coloring.
  Hence, it only remains to restore the regularity of the instance.
  We achieve this as follows.
  First we add a set $T_N$ of $|(T_R\setminus X)\cup U|$ novel colors. We then
  add these colors (arbitrarily) to the color lists of the vertices in
  $V'$ such that the size of every list (for any vertex in $V'$) is
  $|(T_R\setminus X)\cup U|$. This clearly already makes the instance regular,
  however, now we also need to ensure that no vertex in $V'$ can be
  colored with any of the new colors in $T_N$. To achieve this we add a set $C_N$ of $|T_N|$
  novel vertices to $G[V']$, which we connect to every vertex in
  $(C\cap X)\cup C_N$ and whose lists all contain all the new colors
  in $T_N$. It is clear that the constructed instance is equivalent to
  the original instance since all the new colors in $T_N$ are required to color
  the new vertices in $C_N$ and hence no new color can be used to
  color a vertex in $V'$. Moreover, $D$ is still a clique modulator
  and the number $k'$ of missing colors (in each list of
  the constructed instance) is equal to $|D|+|C\cap X|\leq
  2k+5k$ because the instance is $(n-|D|-|C\cap X|)$-regular.
  Finally, the instance has at most $|V'\cup
  C_N|\leq 7k+3k+7k=17k \in \bigoh(k)$ vertices and at most
  $2(|T_R|+|U|)\leq 2(3k+7k)=20k\in \bigoh(k)$ colors, as required.
\end{proof}

\section{Saving $k$ colors: Pre-coloring and List Coloring Variants}
\label{sec:save-col}

In this section, we consider natural pre-coloring and list coloring
variants of the ``saving $k$ colors'' problem, defined as:

\begin{problem}[$k$]{\SKC}
  \Input & A graph $G$ with $n$ vertices and an integer $k$.\\
  
  \Prob  & Does $G$ have a proper coloring using at most $n-k$ colors?
\end{problem}

This problem is known to be {\sf FPT}  (it even allows
for a linear kernel)~\cite{chorFJ04}, when parameterized by $k$. Notably the problem
provided the main motivation for the introduction of \RLC{}
in~\cite{IWOCApaper,AroraBP018}.

We consider the following (pre-coloring and list coloring) extensions of \SKC{}.

\begin{problem}[$n-|Q|$]{\PCESN}
  \Input & A graph $G$ with $n$ vertices and a pre-coloring
  $\lambda_P: X \rightarrow Q$ for $X \subseteq V(G)$ where $Q$ is a
  set of colors.\\
  
  \Prob  & Can $\lambda_P$ be extended to a proper coloring of $G$
  using only colors from $Q$?
\end{problem}

\begin{problem}[$k$]{\LCNminusK}
  \Input & A graph $G$ on $n$ vertices with a list $L(v)$ of colors for every $v \in V(G)$ and an integer $k.$\\
  \Prob  & Is there a proper list coloring of $G$ using at most $n-k$ colors?
\end{problem}

Note that the following variant seems natural, however, is trivially
NP-complete even when the parameter $k$ is equal to $0$, since the
problem with an empty precoloring then corresponds to the problem
whether $G$ can be colored by at most $|Q|$ colors.

\begin{problem}[$k$]{\PCESQ}
  \Input & A graph $G$ with $n$ vertices, a pre-coloring
  $\lambda_P: X \rightarrow Q$ for $X \subseteq V(G)$ where $Q$ is a
  set of colors, and an integer $k$.\\
  
  \Prob  & Can $\lambda_P$ be extended to a proper coloring of $G$
  using at most $|Q|-k$ colors from $Q$?
\end{problem}

Interestingly, we will show that \PCESN{} is {\sf FPT}
and even allows a linear kernel. Thus, we generalize the above-mentioned result of Chor et al.~\cite{chorFJ04}
(set $Q=[n-k]$ and $X=\emptyset$). However,  \LCNminusK{} is W[1]-hard.

\begin{theorem}
  {\PCESN{}} (parameterized by $n-|Q|$) has a kernel with at most $6(n-|Q|)$
  vertices and is hence fixed-parameter tractable.
\end{theorem}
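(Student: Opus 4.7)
My plan is to adapt the \PCECM{} kernel of Section~\ref{sec:precol} (Theorem~\ref{the:pcecm-kernel}) by computing a clique modulator $D$ of $G$ of size at most $2k$ in polynomial time (where $k=n-|Q|$), and then invoking Theorem~\ref{the:pcecm-kernel} to obtain a kernel with at most $3|D|\leq 6k$ vertices. The pre-coloring $\lambda_P$ on $G$ transfers directly to an instance of \PCECM{} once the modulator is in hand, so the entire burden is on step one.

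Since $D\subseteq V(G)$ is a clique modulator of $G$ if and only if $D$ is a vertex cover of the complement $\overline{G}$, I would first compute a maximum matching $M$ in $\overline{G}$ in polynomial time. Because $M$ is maximum (and hence maximal), $V(M)$ is a vertex cover of $\overline{G}$, so $D:=V(M)$ is a clique modulator of $G$ of size $2|M|$. In the favorable case $|M|\leq k$, we have $|D|\leq 2k$, and Theorem~\ref{the:pcecm-kernel} directly yields the desired kernel with at most $6k$ vertices.

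In the remaining case $|M|>k$, the observation of Chor et al.~\cite{chorFJ04} gives $\chi(G)\leq n-|M|<|Q|$, providing ``slack'' in the form of extra colors; however, the pre-coloring may still obstruct extension (for instance, when $\lambda_P$ saturates $Q$ and some uncolored vertex is adjacent to at least one representative of every pre-coloring class). The plan for this case is to apply additional reduction rules, generalizing Chor et al.'s crown rule for \SKC{} and adapting the matching-based rules of Section~\ref{sec:precol} (notably Reduction Rules~\ref{rr:precol-crown} and~\ref{red-rule:matching}), to either directly resolve the instance or shrink it until the maximum matching of $\overline{G}$ drops below $k$, so that the modulator-based argument applies.

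The main obstacle will be the case $|M|>k$: formulating safe, polynomial-time reduction rules that exploit the matching slack in $\overline{G}$ without violating pre-coloring constraints, in particular handling uncolored vertices whose admissible color lists have been heavily constrained by $\lambda_P$. The factor-of-two blow-up from the $3k$-vertex kernel of \SKC{}~\cite{chorFJ04} to the $6k$-vertex kernel here is exactly what one expects from passing through a $2k$-vertex clique modulator, and reflects the extra bookkeeping needed to propagate the pre-coloring constraints through the reduction rules.
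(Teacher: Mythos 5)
Your high-level plan — find a clique modulator via a matching in the complement and invoke the $3k$-vertex kernel of Theorem~\ref{the:pcecm-kernel} — is the right skeleton, and it does yield the $6(n-|Q|)$ bound in the favorable case. But the case $|M|>k$ is not an afterthought you can defer: as written, your proposal leaves it unresolved (you explicitly call it ``the main obstacle'' and gesture at unspecified future reduction rules), and the difficulty you identify is real. A large matching in $\overline{G}$ does not by itself certify a Yes-instance, precisely because the pre-coloring can forbid two non-adjacent vertices from sharing a color.

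The paper's key idea, which you are missing, is to \emph{preprocess $G$ before computing the matching}. Three safe edge-addition rules are applied to produce a supergraph $G'$ of $G$: (R1) add an edge between two uncolored vertices $u,v$ whenever $\lambda_P(N_G(u))\cup\lambda_P(N_G(v))=Q$ (so they could never share a color anyway); (R2) add edges from an uncolored $u$ adjacent to a precolored $v$ to \emph{every} vertex in $\lambda_P^{-1}(\lambda_P(v))$; (R3) add an edge between any two precolored vertices with distinct colors. After these rules, every non-edge of $G'$ — hence every edge of a matching $M$ in $\overline{G'}$ — joins two vertices that demonstrably \emph{can} be assigned a common color consistent with $\lambda_P$ (one case per rule). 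This is exactly what neutralizes your worry about saturated pre-colorings. Consequently, if a maximal matching $M$ in $\overline{G'}$ has $|M|\geq n-|Q|$, one directly constructs a proper extension by merging the colors of the matched pairs; no further reduction rules are needed, and no recursion or shrinking step is required. If instead $|M|\leq n-|Q|$, then $V(M)$ is a clique modulator of size at most $2(n-|Q|)$ (for $G'$, and the rules are safe, so the instances are equivalent), and Theorem~\ref{the:pcecm-kernel} gives the $6(n-|Q|)$-vertex kernel as you anticipated. So the gap is concrete: you need the preprocessing rules (R1)--(R3) both to make the matching computation meaningful in the presence of $\lambda_P$ and to convert the large-matching case into an immediate Yes, rather than into more open-ended reductions.
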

\begin{proof}
  Let $G'$ be the graph obtained from $G$ after applying the following
  preprocessing rules:
  \begin{description}
  \item[(R1)] If $u$ and $v$ are two distinct vertices in $G\setminus X$
    such that $\lambda_P(N_G(u)) \cup \lambda_P(N_G(v))=Q$, then we add
    an edge between $u$ and $v$ in $G$. This rule is safe because $u$ and
    $v$ cannot be colored with the same color.
    \item[(R2)] If $u$ is a vertex in $G\setminus X$ that is adjacent to a
    vertex $v \in X$, then we can safely add all edges between $u$ and every vertex
    in $\lambda_P^{-1}(\lambda_P(v))$. 
  \item[(R3)] If $u$ and $v$ are two distinct vertices in $X$ such that
    $\lambda_P(u)\neq \lambda_P(v)$, then we can again safely add an
    edge between $u$ and $v$.
   \end{description}
  
  Let $M$ be a maximal matching in the complement of
  $G'$. Note that if $|M|\leq n-|Q|$, then $V(M)$ is a clique modulator
  for $G'$ of size at most $2(n-|Q|)$ and we obtain a kernel with at most $6(n-|Q|)$ vertices using
  Theorem~\ref{the:pcecm-kernel}.
  Thus assume that $|M|\geq n-|Q|$. In this case we can safe $|M|\geq
  n-|Q|$ colors by giving the endpoints of every edge in $M$ the same
  color. Namely, let $\{u,v\} \in M$, then:
  \begin{itemize}
  \item if $u,v \notin X$, then it follows from (R1) that there is a
    color $q \in Q$ that can be given to both vertices,
  \item if $u \notin X$ and $v \in X$, then it follows from (R2) that
    we can color $u$ with color $\lambda_P(v)$,
  \item if $u,v \in X$, then by (R3) we have that $\lambda_P(u)=\lambda_P(v)$.
  \end{itemize}
  Note that after coloring the edges in $M$ with the same color,
  removing $V(M)$ from $G'$, and removing the colors used for the
  edges in $M$ from $Q$, the number of colors in the remaining
  instance is equal to the number of vertices in the remaining
  instance, implying that the remaining instance can be properly colored.
\end{proof}

Finally, we show W[1]-hardness of \LCNminusK{} using a parameterized
reduction from \IS{}.
\begin{theorem}\label{the:list-save-hard}
  {\LCNminusK} parameterized by $k$ is W[1]-hard.
\end{theorem}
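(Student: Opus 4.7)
The plan is to give a parameterized reduction from \IS{} (which is W[1]-hard parameterized by solution size) to \LCNminusK{}. Given an instance $(G,k)$ of \IS{} on $n$ vertices (assume $k \ge 2$, since \IS{} is still W[1]-hard under this restriction), I construct an instance of \LCNminusK{} as follows. Let $G'=G$ on the same vertex set. Introduce one fresh color $c^*$ together with a private color $c_v$ for every $v\in V(G)$, and set $L(v)=\{c_v,c^*\}$. The target parameter is $k'=k-1$. The construction is clearly polynomial-time and satisfies $k'\le k$, so it is a valid parameterized reduction.

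The core argument is that in any proper list coloring $\lambda$ of $G'$, each vertex $v$ is forced to take either its private color $c_v$ or the shared color $c^*$. Since $c_v$ appears in no other list, the only mechanism by which two distinct vertices can share a color is for both to be assigned $c^*$. Consequently $S:=\lambda^{-1}(c^*)$ must be an independent set in $G'=G$, and the total number of colors used is exactly $n-|S|+1$ whenever $S\neq\emptyset$. Conversely, for any independent set $S\subseteq V(G)$, the assignment $\lambda(v)=c^*$ for $v\in S$ and $\lambda(v)=c_v$ for $v\notin S$ is a proper list coloring using $n-|S|+1$ colors. Hence the minimum number of colors achievable by a proper list coloring of $G'$ is $n-\alpha(G)+1$, where $\alpha(G)$ is the maximum independent set size. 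Therefore $G'$ admits a list coloring with at most $n-k'=n-(k-1)$ colors if and only if $\alpha(G)\ge k$, which proves the equivalence.

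The conclusion is that \LCNminusK{} parameterized by $k'$ cannot be \textsf{FPT} unless \IS{} parameterized by $k$ is, yielding W[1]-hardness. I do not foresee a real obstacle here; the only mildly delicate point is the off-by-one bookkeeping between the parameters $k$ and $k'$ and the empty-$S$ corner case, which is handled cleanly by restricting to $k\ge 2$ so that a saving of at least one color is genuinely required and corresponds to a non-trivial independent set. A sharper reduction (e.g., from \textsc{Multicolored Clique}) could conceivably give stronger lower bounds, but is not needed for plain W[1]-hardness.
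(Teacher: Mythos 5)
Your proposal is correct and takes essentially the same approach as the paper: a reduction from \IS{} where each vertex $v$ receives a two-element list consisting of a private color and one globally shared color, so that a proper list coloring saves one color for every vertex beyond the first that is mapped to the shared color, and these vertices necessarily form an independent set. Your bookkeeping with $k'=k-1$ (and the explicit restriction $k\geq 2$ to rule out the empty-$S$ corner case) is in fact the correct version of the arithmetic; the paper's proof sets $k'=k+1$ and asserts $n+1-k=n-(k+1)$, which is an off-by-two slip---the intended equivalence only holds with $k'=k-1$, exactly as you have it.
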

\begin{proof}
  We provide a parameterized reduction from \IS{}, which is a
  well-known W[1]-complete problem~\cite{cygan2015}.

  \begin{problem}[$k$]{\IS}
    \Input & A graph $G$ and an integer $k$.\\
    \Prob  & Does $G$ have an independent set of size at least $k$?
  \end{problem}

  Let $(G,k)$ be the given instance of \IS{} and suppose that
  $V(G)=\{v_1,\dotsc,v_n\}$.
  We claim that the instance $(G,L,k')$ of
  \LCNminusK{} with $k'=k+1$ and $L(v_i)=\{i,n+1\}$  is equivalent
  to $(G,k)$. Note that $L$ assigns each
  vertex a private color $i$ as well as a shared color $n+1$.

  Suppose that $G$ has an independent set $X$ of size $k$.
  Let $\lambda \colon V(G) \rightarrow [n+1]$ be the coloring defined
  by setting $\lambda(v_i)=n+1$ if $v_i \in X$ and $\lambda(v_i)=i$,
  otherwise. Then because $X$ is an independent set, we obtain that
  $\lambda$ is a proper list coloring for $G$ that uses
  $n+1-k=n-(k+1)=n-k'$ colors, as required.

  Towards showing the reverse direction, let $\lambda$ be a proper
  list coloring for $G$ that uses at most $n-k'$ colors. Because $n+1$
  is the only color occuring in the list of more than one vertex, it
  follows that at least $k'-1=k$ vertices must be colored with color
  $n+1$. Moreover, because $\lambda$ is a proper coloring, the set $X$
  of vertices that are colored with $n+1$ by $\lambda$ forms an
  independent set of $G$, having size is at least $k$, as required.
\end{proof}

\section{Conclusions} \label{sec:conc}

We have showed several results regarding the parameterized complexity
of \textsc{List Coloring} and \textsc{Pre-Coloring Extension} problems. 
We showed that \textsc{List Coloring}, and hence also
\textsc{Pre-Coloring Extension}, parameterized by the size of a clique
modulator admits a randomized FPT algorithm
with a running time of $\bigoh^*(2^k)$, matching the best known
running time of the basic \textsc{Chromatic Number} problem
parameterized by the number of vertices.
This answers open questions of Golovach et al.~\cite{golovachPS14}.
Additionally, we showed that \textsc{Pre-Coloring Extension}
under the same parameter admits a linear vertex kernel with at most $3k$
vertices and that \RLC{} admits a compression into a problem 
we call \textsc{Budget-Constrained List Coloring}, into an instance
with at most $11k$ vertices, encodable in $\bigoh(k^2 \log k)$ bits. 
The latter also admits a proper kernel with $\bigoh(k^2)$ vertices and
colors. This answers an open problem of Banik et al.~\cite{IWOCApaper}.

One open question is to optimize the bound $11k$ on the number of
vertices in the \RLC{} compression, and/or show a proper kernel with
$\bigoh(k)$ vertices. Another set of questions is raised by
Escoffier~\cite{Escoffier19}, who studied the \textsc{Max Coloring}
problem from a ``saving colors'' perspective. In addition to the
questions explicitly raised by Escoffier, it is natural to ask whether
his problems \textsc{Saving Weight} and \textsc{Saving Color Weights}
admit FPT algorithms with a running time of $2^{\bigoh(k)}$ and/or
polynomial kernels.

\end{document}